\newif\iffinal
\else\usepackage[notref,notcite]{showkeys}\fi
\newtheorem{theorem}{Theorem}[section]
\newtheorem{lemma}[theorem]{Lemma}
\newtheorem{remark}[theorem]{Remark}
\newtheorem{proposition}[theorem]{Proposition}
\newtheorem{assumption}[theorem]{Assumption}
\crefname{theorem}{Theorem}{Theorems}
\crefname{lemma}{Lemma}{Lemmas}
\crefname{proposition}{Proposition}{Propositions}
\crefname{section}{Section}{Sections}
\crefname{assumption}{Assumption}{Assumptions}
\crefname{equation}{Eq.}{Eqs.}
\newcommand{\ba}{\begin{align}}
\newcommand{\ea}{\end{align}}
\newcommand{\ket}[1]{|{#1}\rangle}
\newcommand{\bra}[1]{\langle{#1}|}
\renewcommand{\d}[1]{\nabla_{#1}}
\newcommand\otimesal{\mathop{\hbox{\raise 1.6 ex
  \hbox{$\scriptscriptstyle\mathrm{al}$}
\kern -0.92 em \hbox{$\otimes$}}}}
\newcommand\oplusal{\mathop{\hbox{\raise 1.6 ex
  \hbox{$\scriptscriptstyle\mathrm{al}$}
\kern -0.92 em \hbox{$\oplus$}}}}
\newcommand\Gammal{\hbox{\raise 1.7 ex
\hbox{$\scriptscriptstyle\mathrm{al}$}\kern -0.50 em $\Gamma$}}
\newcommand{\caN}{{\mathcal N}}
\newcommand{\scrN}{{\mathscr N}}
\newcommand{\opunit}{\text{1}\kern-0.22em\text{l}}
\newcommand{\e}{{\mathrm e}}
\renewcommand{\d}{{\mathrm d}}
\newcommand{\beq}{ \begin{equation} }
\newcommand{\beqs}{ \begin{equation*} }
\newcommand{\eeqs}{ \end{equation*} }
\newcommand{\eeq}{ \end{equation} }
\newcommand{\bet}{ \begin{theorem} }
\newcommand{\eet}{ \end{theorem} }
\newcommand{\tr }{\mathrm{tr}}
\newcommand{\nk}{\hat{\mathcal{N}}_k}
\newcommand{\n}{\hat{\mathcal{N}}_\infty}
\newcommand{\id}{\mathsf{1}}
\begin{document}
\title{Non-demolition measurements of observables with general spectra}
%\author{M.~Ballesteros, N.~Crawford, M.~Fraas, J.~Fr\"{o}hlich and B.~Schubnel}
\author[1]{\small M. Ballesteros} %\thanks{miguel.ballesteros@iimas.unam.mx}
\author[2]{N. Crawford} %\thanks{nickc@tx.technion.ac.il}
\author[3]{M. Fraas} %\thanks{martin.fraas@gmail.com}
\author[4]{J. Fr\"ohlich}%\thanks{juerg@phys.ethz.ch}
\author[5]{B. Schubnel}%\thanks{baptiste.schubnel@yahoo.fr}
\affil[1]{Department of Mathematical Physics, Applied Mathematics and Systems Research Institute (IIMAS),  National Autonomous University of Mexico (UNAM)}
\affil[2]{Department of Mathematics, Technion}
\affil[3]{Instituut voor Theoretische Fysica, KU Leuven  }
\affil[4]{Institut f{\"u}r Theoretische Physik,   ETH Zurich}
\affil[5]{Swiss Federal Railways (SBB)}

\renewcommand\Authands{ and }

\normalsize 
\date \today

\maketitle

\begin{abstract} 
It has recently been established that, in a non-demolition measurement of an observable $\mathcal{N}$ with a finite point spectrum, the density matrix of the system approaches an eigenstate of $\mathcal{N}$, i.e., it ``purifies'' over the spectrum of $\mathcal{N}$. We extend this result to observables with general spectra. It is shown that the spectral density of the state of the system converges to a delta function exponentially fast, in an appropriate sense. Furthermore, for observables with absolutely continuous spectra, we show that the spectral density approaches a Gaussian distribution over the spectrum of $\mathcal{N}$. Our methods highlight the connection between the theory of non-demolition measurements and classical estimation theory.
\end{abstract}

\section{Introduction}
In an indirect measurement, information about a quantum system $S$ is obtained by performing a sequence of standard von~Neumann  measurements on probes that have previously interacted with $S$. A theory of indirect measurements has been proposed by Kraus \cite{Kraus}. Upon tracing out the degrees of freedom of the probes, 
the effective time evolution of the system is described by jump operators, $V_\xi$, indexed by probe measurement outcomes $\xi$, which act on the Hilbert space of pure state vectors of the system $S$. These operators encode the statistics of measurement results and the conditional evolution of the system. If a result $\xi$ is recorded in a probe measurement, the state, 
$\vert \psi \rangle$, of the system changes according to the rule
\begin{equation}
\label{eq:change}
\ket{\psi} \quad  \to \quad  \frac{V_\xi \ket{\psi}}{||V_\xi \ket{\psi}||} .
\end{equation}
In order to describe the probabilities of different measurement outcomes, $\xi$, the set, $\mathcal{X}$, 
of all such outcomes must be equipped with a probability measure. In particular, we need to specify an a-priori measure, $\mu$, ``counting'' the different measurement results. The probability measure on $\mathcal{X}$ is then chosen to be
\begin{equation}
\label{eq:stat}
\bra{\psi} V_\xi^* V_\xi \ket{\psi} d \mu(\xi).
\end{equation}
Consistency imposes a normalisation condition on the jump operators, namely 
$$\int_\mathcal{X} V_\xi^* V_\xi \d \mu(\xi) =1.$$
 Apart from this condition, the operators $V_\xi$ can be chosen arbitrarily.
\mbox{Eqs.~(\ref{eq:change}, \ref{eq:stat})} are consequences of the \textit{Born rule}.  The precise form of the jump operators $V_\xi$ can be derived from the joint Hamiltonian evolution of the system and the probes and the Born rule applied to the probe measurements. We do not repeat this derivation here; but see, e.g., \cite{Holevo}. In our analysis we will make a fixed choice of jump operators. 

If the system interacts with a sequence of independent probes, Eqs.~(\ref{eq:change}, \ref{eq:stat}) can be iterated so as to obtain the probability of recording a sequence, $\underline\xi = (\xi_1,\,\xi_2, \dots)$, of measurement results, along with the corresponding changes of the state of the system. A fundamental problem in the theory of indirect measurements is to understand the asymptotic behaviour of the probability distribution on the space of sequences 
$\underline\xi$ of measurement outcomes and of the state of the system. Various aspects of this problem have been studied by different authors: Conditions for asymptotic purification have been given in \cite{Maassen}, entropy production has been studied in \cite{Benoist2016}, and conditions for uniqueness of the invariant measure have been derived in \cite{BFPP17}; a general approach has been outlined in \cite{BFFS}.

We consider a special case of indirect measurements -- so called non-demolition measurements -- with the feature that all jump operators $V_\xi$ are functions of the observable $\mathcal{N}$ of $S$ that one wants to measure. The motivation to study this case comes from experiments carried out in the group of Haroche and Raimond \cite{guerlin} whose theoretical description fits into the framework developed in our paper. In these experiments it is observed that the state of the system gradually approaches an eigenstate of a certain observable $\mathcal{N}$ (the number of photons stored in a cavity). A theoretical description of this phenomenon has been proposed in \cite{BaBe}, see also \cite{Maassen}, and studied more fully in a series of papers; see \cite{BaBeBe1, BaBeBe2, BFFS}, where further details are provided. In the present paper, we continue this line of research by relaxing the conditions on the spectrum of the observable $\mathcal{N}$; (in all previous works only observables with discrete spectra have been considered).

We will exhibit the phenomenon of ``purification'' over the spectrum of $\mathcal{N}$ for the example of an observable with a general spectrum: Let $\ket{\psi_k}$ denote the state of the system $S$ after the $k^{th}$ probe measurement. We show that, for all continuous functions $f$,  $\bra{\psi_k} f(\mathcal{N}) \ket{\psi_k}$ approaches the value $f(\nu)$, for some point $\nu$ in the spectrum of $\mathcal{N}$, as $k \rightarrow \infty$, and that the frequency of occurrence of a specific point $\nu$ is given by Born's rule applied to the initial state, $\ket{\psi_0}$, of $S$. We also determine the rate of approach to the limit, as $k\rightarrow \infty$. (For precise statements of assumptions and results see Theorem~\ref{thm:1}.) 

If the spectrum of $\mathcal{N}$ is non-degenerate and discrete the convergence of the spectral measure of 
$\ket{\psi_k}$ implies that $\ket{\psi_k}$ approaches an eigenstate  of $\mathcal{N}$, as $k\rightarrow \infty$. If the observable $\mathcal{N}$ has continuous spectrum this conclusion does not hold, because there are no normalisable eigenstates associated with points $\nu$ in the continuous spectrum of $\mathcal{N}$.  If, however, the spectrum of $\mathcal{N}$ is purely absolutely continuous then states of $S$ can be represented as functions, $\psi(\nu)$, on the spectrum of $\mathcal{N}$ that are square-integrable with respect to some measure absolutely continuous with respect to Lebesgue measure. Under suitable hypotheses, the wave function, $\psi_{k}(\nu)$, of the state $\ket{\psi_k}$ then turns out to approach a Gaussian function whose width shrinks to $0$, as $k \rightarrow \infty$. The exact description of this convergence result and the assumptions implying that the states $\ket{\psi_k}$ converge, as $k \rightarrow \infty$, are given in Theorem~\ref{T:CLT2}.

To arrive at this generalisation of known results has required a shift in perspective: It turns out to be useful to map the quantum-mechanical problem (or, at least, parts of it) onto a problem of classical parameter estimation. The phenomenon of purification over the spectrum of an observable then turns into the one of consistency of the maximum likelihood estimator. The hypotheses required for consistency  are well known. For our readers' convenience, we will present detailed proofs using quantum mechanical language. Appropriate references for the results underlying our analysis will be provided as well.

In the next section we describe our setting and summarise our main results. We divide these results into law-of-large-number type results valid for arbitrary spectra of the observable, provided some regularity conditions hold, and central-limit type results for which absolute continuity of the spectrum of the observable is needed. In accordance with this division, proofs are presented in Section~\ref{sec:LLN} and Section~\ref{sec:CLT}, respectively.

\section{Setup and Main Results}

General mixed states of a physical system $S$ are density matrices, $\rho$, acting on a separable Hilbert space 
$\mathcal{H}$. Let $\mu$ be a $\sigma$-finite (counting) measure on a measure space 
$(\mathcal{X}, \Sigma_P)$ of probe measurement outcomes. We consider a family of measurable bounded operators $V_\xi$ acting on $\mathcal{H}$ required to satisfy the normalisation condition\begin{equation}
\label{eq:normalization}
\int_\mathcal{X} V_\xi^* V_\xi \,\d \mu(\xi) = 1.
\end{equation}
Furthermore, we introduce the space, $\Xi \equiv \mathcal{X}^\mathbb{N}$, of infinite sequences $\underline\xi \equiv (\xi_1,\,\xi_2,\dots)$ of probe measurement outcomes equipped with the standard sigma algebra, $\Sigma$, generated by cylinder sets. The measure space $(\Xi,\Sigma)$ has a natural filtration $(\Sigma_k)_{k=1,2,\dots}$, where $\Sigma_k$ consists of sets determined by the first $k$ measurement results. Without danger of confusion we will identify a cylinder set $E \in  \Sigma_k$ with its base $E \in \Sigma_P^{\times k}$. We define an operator-valued stochastic process, $V_k$, adapted to the above filtration by setting
$$
V_k(\underline \xi) = V_{\xi_k} \dots V_{\xi_1}.
$$
With every density matrix $\rho$ on $\mathcal{H}$ and every $k=1,2, \dots$, we associate a probability measure, 
$\mathbb{P}_{\rho}^{(k)}$, on $(\Xi, \Sigma_k)$ by setting
\begin{equation}
\label{eq:2d}
\mathbb{P}_{\rho}^{(k)}(E) := \int_{E} \tr(V_k(\underline{\xi}) \rho V^*_k(\underline{\xi})) \d \mu(\xi_1) \dots \d \mu(\xi_k), \quad E \in \Sigma_k.
\end{equation}
By a well known lemma due to Kolmogorov, these measures determine a unique measure, $\mathbb{P}_{\rho}$, on the space $(\Xi, \Sigma)$.
We define a ``posterior state'' by
\begin{equation}\label{posterior}
\rho_k(\underline \xi):= \frac{V_k(\underline{\xi}) \rho V^*_k(\underline{\xi})}{\tr(V_k(\underline{\xi}) \rho V^*_k(\underline{\xi}))}.
\end{equation}
If the initial state $\rho$ is a rank-one projection, $\ket{\psi_0}\bra{\psi_0}$, then the the posterior state is a rank-one projection, $\ket{\psi_k(\underline{\xi})}\bra{\psi_k(\underline{\xi})}$, with
$
\ket{\psi_k(\underline{\xi})} = V_k(\underline{\xi}) \ket{\psi_0}/||V_k(\underline{\xi}) \ket{\psi_0}||.
$

In the case of non-demolition measurements considered in this paper, the operators $V_\xi$ are functions of a self-adjoint bounded operator $\mathcal{N}$, where $\mathcal{N}$ is the ``observable'' to be measured. By functional calculus, the operators $V_\xi \equiv V_\xi(\mathcal{N})$ are then determined by a measurable family, $V_\xi : \nu \in \mathbb{R}  \mapsto V_\xi(\nu) \in \mathbb{C}$, of bounded complex-valued functions satisfying a normalisation condition
$$
\int_\mathcal{X} |V_\xi(\nu)|^2 \d \mu(\xi) = 1, \quad \mbox{for all} \quad \nu \in \mathbb{R}.
$$
This normalisation condition implies Eq.~(\ref{eq:normalization}). Let $\d \lambda_{\rho}(\nu)$ denote the spectral measure of $\mathcal{N}$ with respect to a state $\rho$, (i.e.,
$ \tr(g(\mathcal{N}) \rho) = \int_{\sigma(\mathcal{N})} g(\nu) \d \lambda_{\rho}(\nu)$, for an arbitrary Borel-measurable, integrable, function $g$). For a cylinder set $E$ of the form $E_1 \times \dots \times E_k$, with  $E_j \in \Sigma_{P}, j=1,\dots,k$, the measure $\mathbb{P}_{\rho}$ introduced in (\ref{eq:2d}) is given by
\begin{equation}
\mathbb P_\rho(E) = \int_{\sigma(\mathcal{N})} \d \lambda_\rho (\nu) \mu_\nu(E_1) \dots \mu_\nu(E_k). \quad 
 \quad \mu_\nu(E_j) = \int_{E_j} \d \mu(\xi) f(\xi|\nu), \label{eq:5}
\end{equation}
where 
\begin{equation}\label{condprob}
f(\xi | \nu) := |V_\xi(\nu)|^2
\end{equation}
acquires the meaning of a conditional probability distribution.
We denote by 
$\mathbb{E}_\nu[\cdot]$ the expectation with respect to the measure $\mu_\nu$. Eq.~(\ref{eq:5}) is the \textit{de Finetti decomposition} \cite{Finetti} of the measure $\mathbb P_\rho$.

Let 
$$
l(\nu | \xi) := \log f(\xi | \nu),
$$
be the log-likelihood function, and define by 
\begin{equation}
\label{eq:likelihood}
l_k(\nu) \equiv l_k(\nu | \underline \xi) := \frac{1}{k} \sum_{j=1}^k l(\nu | \xi_j)
\end{equation}
the log-likelihood function of a sequence of $k$ measurements. The maximum-likelihood estimator of the value of $\mathcal{N}$ is then given by
\begin{equation}\label{maxlikelihood}
\hat{\mathcal{N}}_k := \mathrm{argmax}_{\nu \in \sigma(\mathcal{N})} l_k(\nu).
\end{equation}
For any given realization of $\underline \xi$, there may be more than one $\nu$ for which the RHS is maximized. If such an ambiguity arises we choose the value of $\hat{\caN}_k$ from the set of maximas according to some predetermined rule. In \cite{AliprantisBorder}, Theorem 18.19, it is proved that this can be done in a measurable fashion under the hypotheses adopted in our paper.

%In explicit, $\hat{\mathcal{N}}_k$ is any measurable function that is equal to $\nu$ on the set $\{\underline{\xi} \in \Xi : l_k(\nu) > l_k(\nu'), \, \mbox{for all } \nu' \in \sigma(\mathcal{N}) \setminus\{\nu\}\}$. Since $\sigma(\mathcal{N})$ is non-numerable, such set is not measurable without further conditions \textcolor{magenta}{Miguel: Set $\mathcal{J}$ the set of ambiguity points. As long as I understand, the set above would be $  \hat{\mathcal{N}}_k^{-1}(\{\nu \} ) \setminus \mathcal{J}  $. The measurabilily of the sets above would not necessarily imply the measurability of $ \hat{\mathcal{N}}_k $. If we would prove that $\mathcal{J}$ is measurable then we would conclude that   $ \hat{\mathcal{N}}_k^{-1}(\{ \nu \}) $  is measurable, for every $\nu$. This would not directly imply that $   \hat{\mathcal{N}}_k $ is measurable because  
%$ \hat{\mathcal{N}}_k^{-1}( ( \nu_0, \infty ) ) = \bigcup_{\nu > \nu_0}  \hat{\mathcal{N}}_k^{-1}(\{ \nu \})   $ is a non-numerable union of measurable sets     }.  In Lemma~\ref{lem:measurability}, we show that it is measurable under our main assumptions that are given in the following section.

Following the notational convention introduced in the previous paragraph, we mostly forgo the $\underline{\xi}$- dependence of functions. We write $\rho_k \equiv \rho_k(\underline{\xi}), V_k \equiv V_k(\underline{\xi})$, etc. The $\underline{\xi}$- dependence is re-introduced at points where we feel that it will be helpful in following our arguments.

\subsection{Law of Large Numbers}

We prove convergence results for $\nk$ and for the states $\rho_k$ introduced in \eqref{posterior} under minimal hypotheses corresponding to assumptions required for the consistency of a maximum likelihood estimator.

Throughout this article, we assume that $\mathcal{N}$ is a bounded operator. Hence $\sigma(\mathcal{N})$ is a compact subset of $\mathbb{R}$ equipped with the induced metric. 

\begin{assumption}
\label{ass:LLN}
For each $\nu \in \sigma(\mathcal{N})$, the function $f(\xi | \nu)$ belongs to $L^1(\mathcal{X}, \d \mu)$ and has the following further properties.
\begin{enumerate}
\item {\bf Identifiability.} The map
$$
\nu \in \sigma(\mathcal{N}) \quad \mapsto \quad f(\xi |\nu) \in  L^1(\mathcal{X}, \d \mu)
$$
is injective, i.e., for $\nu \neq \nu'$, the functions $f(\xi | \nu)$ and $f(\xi | \nu')$ are not identical.
\item {\bf Continuity.}  For every $\xi \in \mathcal{X}$, the function $f(\xi | \nu)$ is continuous on the spectrum
$\sigma(\mathcal{N})$ of the observable $\mathcal{N}$.
\item {\bf Dominance.}  The log-likelihood function $l(\nu | \xi) = \log f(\xi | \nu)$ is dominated in the sense that
$$
\sup_{\nu' \in \sigma(\mathcal{N})} | l(\nu' | \xi)| \in L^1(\mathcal{X}, f(\xi| \nu) \d \mu(\xi)),
$$
for all $\nu \in \sigma(\mathcal{N})$. 
\end{enumerate}
\end{assumption}

The last part of this assumption guarantees that the \textit{relative entropy},
\begin{equation}
\label{eq:8}
S(\nu  | N) = \min_{\nu' \in N}  \mathbb{E}_\nu [l(\nu|\xi) - l(\nu'|\xi)],
\end{equation}
is well defined for any point $\nu$ and any closed subset $N$ of  the spectrum $\sigma(\mathcal{N})$. Jensen's inequality implies that the relative entropy is non-negative, $S(\nu  | N) \geq 0$, and Assumption~\ref{ass:LLN}.1 implies that $S(\nu  | N) = 0$ if and only if $\nu \in N$. Moreover,  from \cite{AliprantisBorder}, Theorem 18.19, we infer that $  S(\cdot  | N) $ is measurable.  

For a Borel set $N\in \sigma(\mathcal{N})$, we let $\Pi(N)$ denote the spectral projection of $\mathcal{N}$ associated with $N$. 

\begin{theorem}
\label{thm:1}
Given Assumption~\ref{ass:LLN}, the maximum likelihood estimator $\hat{\mathcal{N}}_k$ converges almost surely to a random variable $\hat{\mathcal{N}}_\infty$, and, for any Borel set $N \subset \sigma(\mathcal{N})$,
\begin{align}\label{t22.1}
\mathbb P_\rho (\underline \xi : \hat{\mathcal{N}}_\infty \in N) = \tr( \Pi(N) \rho).
\end{align}
Furthermore, if $N$ is the closure of an arbitrary open subset $O$ of the spectrum $\sigma(\mathcal{N})$ of the operator $\mathcal{N}$ contained in the support of the measure $\lambda_\rho$, then 
\begin{align}\label{t22.2}
- \lim_{k \to \infty} \frac{1}{k} \log \tr( \Pi(N) \rho_k) = S(\n | N ), \quad \mathbb P_\rho-\mbox{almost surely}.
\end{align}
\end{theorem}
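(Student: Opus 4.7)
The approach is to exploit the de Finetti decomposition \eqref{eq:5} in order to reduce the statement to the consistency theory of maximum likelihood estimation. Under $\mathbb{P}_\rho$, the process $(\xi_j)$ can be realised as iid draws from $f(\xi|\nu)\,d\mu(\xi)$ conditional on a hidden parameter sampled from $\lambda_\rho$; since $\lambda_\rho(N) = \tr(\Pi(N)\rho)$, this gives \eqref{t22.1} as soon as one shows that $\lim_k \hat{\mathcal{N}}_k$ almost surely coincides with the hidden parameter. For \eqref{t22.2}, the essential algebraic observation is that each $V_{\xi_j}$ is a Borel function of $\mathcal{N}$ and hence commutes with $\Pi(N)$, so spectral calculus applied to $V_k^*V_k = \prod_{j=1}^k|V_{\xi_j}(\mathcal{N})|^2 = e^{k l_k(\mathcal{N})}$ yields
\begin{equation*}
\tr(\Pi(N)\rho_k) \;=\; \frac{\int_N e^{k l_k(\nu)}\,d\lambda_\rho(\nu)}{\int_{\sigma(\mathcal{N})} e^{k l_k(\nu)}\,d\lambda_\rho(\nu)}.
\end{equation*}

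The first block of work is a uniform law of large numbers: conditional on the hidden parameter being $\nu$, the random functions $l_k(\cdot)$ converge almost surely, uniformly on $\sigma(\mathcal{N})$, to $L_\nu(\nu') := \mathbb{E}_\nu[l(\nu'|\xi)]$. Pointwise convergence is the strong law; the upgrade to uniformity is a standard bracketing/Glivenko--Cantelli argument relying on compactness of $\sigma(\mathcal{N})$, the continuity in $\nu'$ from Assumption~\ref{ass:LLN}.2, and the integrable envelope of Assumption~\ref{ass:LLN}.3 (which also gives continuity of $L_\nu$ by dominated convergence). Identifiability (Assumption~\ref{ass:LLN}.1) combined with strict Jensen shows $L_\nu(\nu')<L_\nu(\nu)$ for $\nu'\ne\nu$, so $\nu$ is the unique maximiser of $L_\nu$; coupled with the uniform LLN this is the classical Wald argument and yields $\hat{\mathcal{N}}_k\to\nu$ a.s., identifying $\hat{\mathcal{N}}_\infty$ with the hidden parameter.

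For \eqref{t22.2} I would then apply a Varadhan-type Laplace asymptotic to the numerator and denominator of the displayed ratio, establishing that almost surely
\begin{equation*}
\tfrac{1}{k}\log \int_A e^{k l_k(\nu)}\,d\lambda_\rho(\nu) \;\longrightarrow\; \sup_{\nu'\in A} L_{\hat{\mathcal{N}}_\infty}(\nu')
\end{equation*}
for the relevant sets $A$. The upper bound is immediate from uniform convergence and the trivial sup-estimate on the exponential; the matching lower bound requires that near any (near-)maximiser of $L_{\hat{\mathcal{N}}_\infty}$ in $A$ one can find an open neighbourhood of positive $\lambda_\rho$-mass. For the numerator this is exactly what the hypothesis $N=\overline{O}$ with $O\subset\supp(\lambda_\rho)$ delivers (using continuity of $L_{\hat{\mathcal{N}}_\infty}$ to transfer the sup from $\overline O$ back into $O$); for the denominator, $\hat{\mathcal{N}}_\infty$ is itself the unique maximiser and lies in $\supp(\lambda_\rho)$ almost surely, so every open neighbourhood has positive mass by definition. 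Subtracting gives
\begin{equation*}
-\tfrac{1}{k}\log \tr(\Pi(N)\rho_k)\longrightarrow L_{\hat{\mathcal{N}}_\infty}(\hat{\mathcal{N}}_\infty)-\sup_{\nu'\in N} L_{\hat{\mathcal{N}}_\infty}(\nu') = \inf_{\nu'\in N}\mathbb{E}_{\hat{\mathcal{N}}_\infty}\!\left[l(\hat{\mathcal{N}}_\infty|\xi)-l(\nu'|\xi)\right] = S(\hat{\mathcal{N}}_\infty\,|\,N).
\end{equation*}

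The main technical obstacle is the uniform LLN on all of $\sigma(\mathcal{N})$: the null set on which uniform convergence fails must be chosen independently of $\nu'$, and one must handle the two-tier randomness (integrate first against $f(\xi|\nu)\,d\mu$, then against $\lambda_\rho$) to produce an almost sure statement under $\mathbb{P}_\rho$. Once this is secured, the Laplace step is essentially deterministic on any realisation where the uniform LLN succeeds, and the only remaining subtlety is a careful invocation of the support hypothesis on $\lambda_\rho$ for the lower bound.
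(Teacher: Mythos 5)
Your proposal is correct and follows essentially the same route as the paper: the de Finetti decomposition \eqref{eq:5} reduces everything to statements conditional on a hidden parameter $\nu$, Wald's consistency argument via the uniform law of large numbers (the paper cites \cite[Theorem 16]{Ferguson} rather than re-deriving it by bracketing) gives $\hat{\mathcal{N}}_k\to\nu$ and hence \eqref{t22.1}, and the same Laplace upper/lower bounds on $\int_A e^{k l_k(\nu)}\,d\lambda_\rho(\nu)$ — with the support hypothesis supplying positive $\lambda_\rho$-mass near the maximiser — give \eqref{t22.2}. No substantive differences to report.
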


The first part of the theorem says that the density matrices $\rho_k, k=1,2,\dots,$ purify over the spectrum of $\sigma(\mathcal{N})$. As $k \to \infty$, the spectral measure of $\mathcal{N}$ associated  with $\rho_k$ concentrates on the point $\hat{\mathcal{N}}_\infty$. The speed of convergence of this concentration is described in the second part of the theorem. It is quantified by a large deviation principle with a rate expressed in terms of the relative entropy (\ref{eq:8}). For observables with discrete spectrum, various versions of this statement have been established previously, see \cite{ BaBeBe1, BaBeBe2, BFFS}.

\subsection{Central Limit Theorem}
 To describe the asymptotic behavior of $\nk$ corresponding to the central limit theorem we require additional regularity assumptions. It is convenient to assume that the probability distributions $f(\xi | \nu)$ are defined for every $\nu \in \mathbb{R}$ in such manner that $f(\xi| \cdot)$ is continuous for all $\xi$, and that $f(\xi|\nu)=1$ for all $\nu$ outside a compact interval containing $ \sigma(\mathcal{N})$.  
 \begin{assumption}
\label{ass:CLT}
For all $\nu \in \mathbb{R}$, the following conditions hold:
\begin{enumerate}
\item {\bf Positivity.} For all $\nu \in \sigma(\mathcal{N})$, the probability distribution function  $f(\xi | \nu)$, see Eq. \eqref{condprob}, is strictly positive.
\item {\bf Continuity.}  The probability distribution $f(\xi| \nu)$ is twice continuously differentiable in $\nu$. 

\item {\bf Integrability.} There exists a function $g(\xi) \in L^1(\mathcal{X}, \d \mu_\nu(\xi))$ such that 
$$ 
 |\partial_\nu^{j} f(\xi|\nu')|  < g(\xi), \quad j=1,2, 
$$
for all $\nu' \in \mathbb{R}$, and the following differentiation under the integral sign holds true, 
\begin{equation}
\label{assF}
\int \partial_\nu^j f(\xi |\nu) \d \mu(\xi) = \partial_\nu^j \int f(\xi | \nu) \d \mu(\xi) \,\,(=0), \hspace{1cm}  j \in \{1, 2 \}.
\end{equation}
 The log-likelihood function is square-integrable, 
 $\partial_\nu l(\nu|\xi) \in L^2(\mathcal{X}, \d \mu_\nu(\xi))$, 
and $\mathbb{E}_{\nu}[- \partial_\nu^2 {l}(\nu | \xi))]  $  is strictly positive, for all $\nu \in \sigma(\mathcal{N})$.
\end{enumerate}

\end{assumption}

The quantity $F(\nu) := \mathbb{E}_{\nu}[(\partial_\nu {l}(\nu | \xi))^2)]$ is known as the ``Fisher Information'' of the family of  distributions $f(\xi | \nu)$. Since $f(\xi|\nu)$ is a probability distribution, the right hand side of Eq.~(\ref{assF}) vanishes. This then implies that the Fisher information is also given by  $F(\nu) = -\mathbb{E}_\nu[\partial_\nu^2l(\nu|\xi)]$. The following theorem is a version of the central limit theorem, adapted to our setting. 
\begin{theorem}
\label{T:CLT}
Suppose Assumptions~\ref{ass:LLN} and \ref{ass:CLT} are valid. Then convergence in distribution, 
\begin{equation}\label{noent}
\sqrt{k}[\hat{\mathcal{N}}_k - \hat{\mathcal{N}}_\infty] \quad \stackrel{d}\longrightarrow \quad \scrN(0 ,F^{- 1}(\hat{\mathcal{N}}_\infty)),
\end{equation}
holds, where $\scrN(0 ,\sigma^2)$ is the normal distribution with zero mean and variance $\sigma^2$. 
\end{theorem}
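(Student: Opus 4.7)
The plan is to reduce the statement to the classical asymptotic normality of the maximum likelihood estimator for i.i.d.\ observations, and then re-mix via the de Finetti decomposition \eqref{eq:5}. Under the conditional measure $\mu_\nu^{\otimes\mathbb{N}}$ associated with a fixed parameter $\nu \in \sigma(\mathcal{N})$, the outcomes $\xi_1,\xi_2,\dots$ are i.i.d.\ with density $f(\cdot|\nu)$, and by Theorem~\ref{thm:1} the random variable $\hat{\mathcal{N}}_\infty$ coincides almost surely with $\nu$ under this conditioning. It therefore suffices to prove that, for each $\nu$ in the support of $\lambda_\rho$,
$$\sqrt{k}\bigl(\hat{\mathcal{N}}_k - \nu\bigr) \;\stackrel{d}{\longrightarrow}\; \scrN\bigl(0, F^{-1}(\nu)\bigr)$$
under $\mu_\nu^{\otimes\mathbb{N}}$, and then to integrate the resulting conditional convergence against $\d\lambda_\rho(\nu)$, using continuity of $F$ on $\sigma(\mathcal{N})$, to obtain the asserted mixture law.

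The core step is the standard score-function Taylor expansion around the true parameter. Almost-sure consistency of $\hat{\mathcal{N}}_k$ provided by Theorem~\ref{thm:1}, together with the $C^2$ extension of $f(\xi|\cdot)$ to $\mathbb{R}$ that is constant equal to one outside a compact interval containing $\sigma(\mathcal{N})$, ensures that for $k$ sufficiently large $\hat{\mathcal{N}}_k$ is an interior critical point of the extended log-likelihood, so that $\partial_\nu l_k(\hat{\mathcal{N}}_k) = 0$. A first-order expansion around $\nu$ then yields
$$0 \;=\; \partial_\nu l_k(\hat{\mathcal{N}}_k) \;=\; \partial_\nu l_k(\nu) \;+\; \bigl(\hat{\mathcal{N}}_k - \nu\bigr)\,\partial_\nu^2 l_k(\tilde\nu_k),$$
for some $\tilde\nu_k$ lying between $\nu$ and $\hat{\mathcal{N}}_k$, so that
$$\sqrt{k}\bigl(\hat{\mathcal{N}}_k - \nu\bigr) \;=\; -\frac{\sqrt{k}\,\partial_\nu l_k(\nu)}{\partial_\nu^2 l_k(\tilde\nu_k)}.$$
Assumption~\ref{ass:CLT}.3 applied with $j=1$ gives $\mathbb{E}_\nu[\partial_\nu l(\nu|\xi)] = 0$ and $\mathrm{Var}_\nu(\partial_\nu l(\nu|\xi)) = F(\nu)$, so the classical i.i.d.\ central limit theorem delivers $\sqrt{k}\,\partial_\nu l_k(\nu) \to \scrN(0,F(\nu))$ in distribution.

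For the denominator I would invoke a uniform law of large numbers on a small neighborhood of $\nu$: the integrable dominant $g(\xi)$ from Assumption~\ref{ass:CLT}.3 for $|\partial_\nu^2 f|$, combined with the uniform lower bound on $f$ that follows from strict positivity (Assumption~\ref{ass:CLT}.1) and continuity on the compact set $\sigma(\mathcal{N})$, produces an integrable majorant for $\sup_{|\nu'-\nu|<\delta}|\partial_\nu^2 l(\nu'|\xi)|$ when $\delta$ is small enough. Coupled with $\tilde\nu_k \to \nu$ almost surely, this yields $\partial_\nu^2 l_k(\tilde\nu_k) \to \mathbb{E}_\nu[\partial_\nu^2 l(\nu|\xi)] = -F(\nu)$. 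Slutsky's theorem then assembles the two limits into the desired conditional CLT, and integration against $\lambda_\rho$ concludes. The main obstacle is bookkeeping rather than conceptual: one has to convert the bounds on $\partial_\nu^j f$ in Assumption~\ref{ass:CLT}.3 into usable bounds on $\partial_\nu^j l = \partial_\nu^j \log f$ uniformly in a neighborhood of $\nu$, and to handle the case where $\nu$ lies on the boundary of $\sigma(\mathcal{N})$; the smooth extension of $f$ to the whole real line, together with the fact that $l_k$ vanishes on the flat region outside $\sigma(\mathcal{N})$, is precisely what neutralises the boundary issue and validates the first-order condition used above.
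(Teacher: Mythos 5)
Your proposal is correct and follows essentially the same route as the paper: the Cram\'er-style argument of fixing $\nu$, expanding the score $\partial_\nu l_k$ around the zero at $\hat{\mathcal{N}}_k$ via the mean value theorem, applying the i.i.d.\ CLT to the numerator and the uniform law of large numbers to the second-derivative denominator under $\mu_\nu^{\otimes\mathbb{N}}$, and then mixing over $\nu$ with the de Finetti decomposition \eqref{eq:5} and dominated convergence. Your additional remarks on the boundary of $\sigma(\mathcal{N})$ and on dominating $\partial_\nu^2 l$ address details the paper leaves implicit, so they are welcome but not a departure from its method.
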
 
In a more restricted setting, such a result has been proven in \cite{BFFS}. Note that, in Eq.~\eqref{noent}, the variance of the limit-distribution is itself a random variable. In more precise terms, the following holds: For every $\nu \in \sigma(\mathcal{N}) $, let $ X_\nu $ be a  random variable with distribution  
$  \scrN(0 ,F^{- 1}(\nu)) $. Pick some $a \in \mathbb{R}$, and define
 $A_k := \Big ( \sqrt{k}\big ( \hat{\mathcal{N}}_k - \n \big ) \Big )^{-1} \big ( (- \infty, a] \big ) $. % and 
%$ g_k (\upsilon) :=   \mu_{\upsilon}^{\otimes \mathbb{N}}\Big (    \sqrt{k}\Big ( \hat{\mathcal{N}}_k - \upsilon \Big )^{-1} \big ( (- \infty, a] \big ) \Big )  $. 
Then (see Eq. \eqref{eq:5}) 
\begin{equation}\label{cortr2}
\lim_{k \to \infty } \mathbb P_\rho(A_k) = \lim_{k \to \infty }  \int_{\sigma(\mathcal{N})} \d \lambda_\rho (\nu) \mu_{\nu}^{\otimes \mathbb{N}}(A_k)
 = \int_{\sigma(\mathcal{N})} \d \lambda_\rho (\nu) \mathbb{P}( X_\nu \in 
 (-\infty, a] ).
\end{equation}
 In view of Lemma \ref{Lprinc} below, it is claimed that $\n = \nu  $, $\mu_{\nu}^{\mathbb{N}}$-almost everywhere. Here $   \mathbb{P}( X_\nu \in 
 (-\infty, a] ) $ denotes the probability that $  X_\nu \in 
 (-\infty, a] $.  
 
In order to conclude asymptotic normality of the posterior states  $\rho_k$, we require further assumptions on  $\mathcal{N}$.
 \begin{assumption} 
 \label{ass:Post}
 ~
\begin{enumerate}
\item {\bf Uniform multiplicity.} We suppose that  $\mathcal{N}$ is of uniform multiplicity $n$,
$
\mathcal{H} =  L^2(\mathbb{R},\, \mathbb{C}^n, \d  \Theta )
$
for a regular Borel measure $\Theta$, and 
\begin{equation}
\label{operator}
 (\mathcal{N}f)(\nu)   =  \nu f(\nu). 
\end{equation}  
\item {\bf Absolute continuity.}  $\Theta$ is absolutely continuous with respect to the Lebesgue measure $\lambda$. We denote its Radon-Nikodym derivative by $h$, $$\d \Theta(\nu) = h(\nu) \d \lambda(\nu).$$
\item {\bf Regularity.}  The function $h$ is supported on the spectrum $\sigma(\mathcal{N})$ of $\mathcal{N}$, and it is continuous and strictly positive on $\sigma(\mathcal{N})$. The boundary of $\sigma(\mathcal{N})$ is assumed to have Lebesgue measure $0$.
\end{enumerate}
\end{assumption}
The notion of uniform multiplicity is discussed, e.g., in \cite[Section VII.2]{Reed-Simon-1}. The first part of the third assumption is redundant, as the support of $\Theta$ equals $ \sigma(\mathcal{N}) $. In Remark~\ref{remark} we explain how to weaken this assumption.   

For a Hilbert space $ \mathcal{K} $, we denote by $\mathcal{B}_1(\mathcal{K})$ the space of trace class operators.
Under Assumption~\ref{ass:Post}, an operator $\tau \in \mathcal{B}_1(\mathcal{H})$ is an integral operator with a matrix-valued integral kernel $\tau(\nu,\,\nu')$, where $\nu, \nu' \in \mathbb{R}$:   
$$
( \tau f)(\nu) = \int_{\mathbb{R}}  \tau(\nu, \nu' ) f(\nu ') d \Theta(\nu') .  
$$   
Theorem 2.12 in \cite{Simon_Trace}  states that if the kernel of $\tau$ is continuous then the trace of $\tau$  is given by 
\begin{align}\label{Tr}
\tr (  \tau ) =  \int   \tr_{\mathbb{C}^n} \Big (\tau(\nu, \nu) \Big )   d \Theta(\nu),
\end{align}
where $ \tr(\cdot) $ denotes the trace with respect to $L^2(\mathbb{R}, \mathbb{C}^n,  d \Theta)$  and 
$  \tr_{ \mathbb{C}^n }(\cdot ) $ is the trace on $n\times n$ complex matrices, $\mathbb{M}_{n}(\mathbb{C})$. Eq.~(\ref{Tr}) remains valid also for an integral kernel $ \tau(\nu, \nu') $ that is continuous for $\nu, \nu'$ in a bounded measurable set $N$ and vanishes for $  \nu $  $ \notin N   $  or $\nu' \notin N$.

For a real valued function $\theta(\nu | \xi)$, the transformation 
\begin{equation}
\label{gauge}
V_\xi(\nu) \quad \mapsto \quad \exp(i \theta(\nu | \xi)) V_\xi(\nu)
\end{equation}
transforms the density matrix by a random gauge transformation
$$
\rho_k(\nu,\nu') \quad \mapsto \quad e^{i k \theta_k(\nu)}  \rho_k(\nu,\nu') e^{-i k \theta_k(\nu)}, 
$$
where
$$
\theta_k(\nu) := \frac{1}{k} \sum_{j=1}^k \theta(\nu | \xi_k).
$$
We address the question of convergence of such transformation as $k$ goes to infinity in Remark~\ref{rem:gauge}. In the main text we fix a convenient gauge.
\begin{assumption}
\label{ass:phase} 
The function $V_\xi(\nu)$ is real and positive.
\end{assumption} 
 
The time evolved density matrix then takes the form
\begin{equation}
\label{eq:8prima}
\rho_k(\nu, \nu') = \frac{e^{\frac{1}{2} k l_k(\nu)} \rho(\nu, \, \nu') e^{\frac{1}{2} k l_k(\nu')}}{\int_{\sigma(\mathcal{N})} e^{  k l_k(\nu)} \tr(\rho(\nu,\nu))   \d \Theta(\nu)}.
\end{equation} 
We recall that $l_k(\nu)$ was defined in Eq~(\ref{eq:likelihood}).
The following theorem shows that the density matrix $\rho_k$ is close to a Gaussian state as $k$ tends to infinity. We define the following normalized Gaussian kernel 
$$
\boldsymbol{G}_F(\nu, \nu') = \frac{1}{  \int   \e^{-\frac{F}{2}x^2} \ d x    } \e^{-\frac{F}{4}(\nu^2 + \nu'^2)}.
$$ 
Moreover, for a density matrix $\rho$ with a continuous kernel $ \rho(\nu, \nu') $  we set 
\begin{align}\label{crho}
c_{\rho}(\nu) := \frac{{ \rho}  (\nu,\nu)}{ \tr_{\mathbb{C}^n} { \rho}(\nu,\nu )} .
\end{align}
When the numerator above vanishes, we set $  c_{\rho}$ to be zero. 

For a fixed sequence $\underline \xi$ we set 
$$ \mathcal{B}_1^{(k)} :=  \mathcal{B}_1 \Big (  L^2\big [\mathbb{R},\, \mathbb{C}^n, h(  \hat{\mathcal{N}}_k + \frac{\nu}{\sqrt{k}} \big ) \d \lambda(\nu) \big ]  \Big).  $$
We recall the definition of Fisher information at a point $\nu$, $F(\nu) = \mathbb{E}_{\nu}[(\partial_\nu {l}(\nu | \xi))^2)]$. 
\begin{theorem}
\label{T:CLT2}  
We require Assumptions ~\ref{ass:LLN}, \ref{ass:CLT}, \ref{ass:Post} and \ref{ass:phase}. Moreover, we assume that  that  $\rho(\nu,\,\nu')$, restricted to   $ \sigma(\mathcal{N}) \times \sigma(\mathcal{N})  $, is continuous.
In addition, we require that for every $\nu$   
\begin{align}\label{assumption}
{\lim_{k \to \infty}e^{- k l_k(\nk)} \int_{\mathbb{R}} e^{k l_k (\nk+ \frac{\nu'}{\sqrt{k}})} \d \lambda (\nu')   = \int_{\mathbb{R}} \e^{-\frac{F(  \nu  )}{2}{\nu'}^2} \d \lambda(\nu')}
\end{align} 
holds $\mu_{\nu}^{\otimes \mathbb{N}}$- almost surely.
  Then, 
\begin{align*} \label{Esta}
\lim_{k \to \infty } \Big \|      \frac{1}{\sqrt{k}} { \rho}_k 
  ( \hat{\mathcal{N}}_k + \frac{\nu}{\sqrt{k}} , \hat{\mathcal{N}}_k + \frac{\nu'}{\sqrt{k}})     -      \frac{ c_{\rho}( \n ) }{h(\n)}  \boldsymbol{G}_{F(\n)}(\nu, \nu')   \Big \|_{ \mathcal{B}^{(k)}_1
 } = 0 ,
\end{align*}
almost surely with respect to the measure $\mathbb{P}_{\rho}$. 
\end{theorem}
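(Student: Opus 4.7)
\textbf{Proof plan for Theorem \ref{T:CLT2}.}

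The plan is to work directly with the explicit formula \eqref{eq:8prima} for $\rho_k(\nu,\nu')$ and perform a Laplace-style expansion around the maximum likelihood estimator $\hat{\mathcal{N}}_k$. After the rescaling $\nu \mapsto \hat{\mathcal{N}}_k + \nu/\sqrt{k}$, the numerator becomes
\[
e^{\frac{1}{2}k l_k(\hat{\mathcal{N}}_k)}\cdot e^{\frac{1}{2}[k l_k(\hat{\mathcal{N}}_k+\nu/\sqrt k)-k l_k(\hat{\mathcal{N}}_k)]}\,\rho(\hat{\mathcal{N}}_k+\tfrac{\nu}{\sqrt k},\hat{\mathcal{N}}_k+\tfrac{\nu'}{\sqrt k})\,e^{\frac{1}{2}k l_k(\hat{\mathcal{N}}_k)}\cdot e^{\frac{1}{2}[k l_k(\hat{\mathcal{N}}_k+\nu'/\sqrt k)-k l_k(\hat{\mathcal{N}}_k)]}.
\]
Using Assumption~\ref{ass:CLT}, I would Taylor-expand $k l_k$ around the maximizer $\hat{\mathcal{N}}_k$ (where $\partial_\nu l_k$ vanishes, at least for $k$ large enough so that $\hat{\mathcal{N}}_k$ lies in the interior of $\sigma(\mathcal{N})$), obtaining
\[
k l_k(\hat{\mathcal{N}}_k+\tfrac{\nu}{\sqrt k})-k l_k(\hat{\mathcal{N}}_k)=\tfrac{1}{2}\nu^2\,\partial_\nu^2 l_k(\hat{\mathcal{N}}_k)+o(1),
\]
and apply the classical strong law of large numbers to $\partial_\nu^2 l_k$, together with continuity of $F$ and the consistency $\hat{\mathcal{N}}_k\to\hat{\mathcal{N}}_\infty$ supplied by Theorem~\ref{thm:1}, to conclude $\partial_\nu^2 l_k(\hat{\mathcal{N}}_k)\to -F(\hat{\mathcal{N}}_\infty)$ $\mathbb P_\rho$-a.s.

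For the denominator, after the same change of variables it reads
\[
Z_k=\tfrac{1}{\sqrt k}\,e^{k l_k(\hat{\mathcal{N}}_k)}\!\int e^{k l_k(\hat{\mathcal{N}}_k+\nu/\sqrt k)-k l_k(\hat{\mathcal{N}}_k)}\,\tr_{\mathbb C^n}\!\rho(\hat{\mathcal{N}}_k+\tfrac{\nu}{\sqrt k},\hat{\mathcal{N}}_k+\tfrac{\nu}{\sqrt k})\,h(\hat{\mathcal{N}}_k+\tfrac{\nu}{\sqrt k})\,\d\lambda(\nu).
\]
Here I would use the continuity of $h$ and of $\nu\mapsto\tr_{\mathbb C^n}\rho(\nu,\nu)$ together with hypothesis~\eqref{assumption} to replace these two factors by their limiting constants $h(\hat{\mathcal{N}}_\infty)$ and $\tr_{\mathbb C^n}\rho(\hat{\mathcal{N}}_\infty,\hat{\mathcal{N}}_\infty)$ and to reduce the integral to $\int e^{-\tfrac{F(\hat{\mathcal{N}}_\infty)}{2}\nu^2}\d\lambda(\nu)$. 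Dividing numerator by denominator cancels the factor $e^{k l_k(\hat{\mathcal{N}}_k)}$ and the $1/\sqrt k$, producing exactly the pointwise limit $c_\rho(\hat{\mathcal{N}}_\infty)\,h(\hat{\mathcal{N}}_\infty)^{-1}\,\boldsymbol G_{F(\hat{\mathcal{N}}_\infty)}(\nu,\nu')$.

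The main technical step, and the one I expect to be the hardest, is upgrading this pointwise (in $\nu,\nu'$) convergence of the integral kernels to convergence in the trace-class norm $\|\cdot\|_{\mathcal B_1^{(k)}}$, since the ambient Hilbert space itself varies with $k$. My strategy is to invoke \eqref{Tr}, which expresses the trace norm of an operator with continuous kernel as an $L^1$-type integral of the diagonal (after polar decomposition/absorption of the phases into the weight, which is harmless by Assumption~\ref{ass:phase}); it then suffices to dominate the integrand uniformly in $k$ by an integrable function. On a bounded window $|\nu|\le R$ this follows from the uniform Taylor bound in Assumption~\ref{ass:CLT}.3 together with continuity of $\rho$ and $h$. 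Outside the window, I would combine Assumption~\ref{ass:LLN}.3 (dominance) with hypothesis~\eqref{assumption}, which essentially says no mass escapes to infinity in the Laplace integral, to argue that the contributions from $|\nu|>R$ go to zero uniformly in $k$ as $R\to\infty$. The fact that the boundary of $\sigma(\mathcal{N})$ has Lebesgue measure zero (Assumption~\ref{ass:Post}.3) ensures that truncation to $\sqrt k(\sigma(\mathcal{N})-\hat{\mathcal{N}}_k)$ does not introduce boundary issues in the limit. Combining the pointwise limit with this uniform integrability, dominated convergence delivers the claimed trace-norm convergence $\mathbb P_\rho$-almost surely.
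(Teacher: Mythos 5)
Your pointwise analysis of the kernel is essentially the paper's: rescale around $\hat{\mathcal{N}}_k$, Taylor-expand $l_k$ using $\partial_\nu l_k(\hat{\mathcal{N}}_k)=0$ so that $k\,l_k(\hat{\mathcal{N}}_k+\nu/\sqrt k)-k\,l_k(\hat{\mathcal{N}}_k)=\tfrac{\nu^2}{2}\partial_\nu^2 l_k(\nu_1)$, use the uniform law of large numbers (you will need the \emph{uniform} version, not the classical SLLN, since $\nu_1$ is a random intermediate point) to send this to $-\tfrac{\nu^2}{2}F$, and use hypothesis~\eqref{assumption} to control the normalising integral. That part is fine.

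The genuine gap is in the step you yourself identify as the hardest: upgrading pointwise kernel convergence to convergence in $\|\cdot\|_{\mathcal B_1^{(k)}}$. You propose to "invoke \eqref{Tr}, which expresses the trace norm of an operator with continuous kernel as an $L^1$-type integral of the diagonal" and then dominate the integrand. But \eqref{Tr} computes the \emph{trace}, not the trace \emph{norm}: $\tr\tau=\int\tr_{\mathbb C^n}\tau(\nu,\nu)\,\d\Theta(\nu)$ equals $\|\tau\|_1$ only when $\tau\ge 0$. The operator whose norm you must control is the \emph{difference} $\tau_k-\tau$, which is not positive even though $\tau_k$ and $\tau$ are, so integrating (the absolute value of) its diagonal kernel gives no bound on its trace norm -- an operator can have vanishing diagonal kernel and large trace norm. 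The "polar decomposition/absorption of phases" remark does not repair this: the polar decomposition $A=U|A|$ involves a partial isometry on the Hilbert space, and the kernel of $|A|$ is not the pointwise modulus of the kernel of $A$. The paper flags exactly this trap and circumvents it with Theorem~2.19 of Simon's trace-ideals book: if $\tau_k\to\tau$ and $\tau_k^*\to\tau^*$ strongly and $\|\tau_k\|_1\to\|\tau\|_1$, then $\tau_k\to\tau$ in trace norm. Concretely, one normalises $\widetilde\tau_k=C_k\tau_k\ge 0$, proves (i) Hilbert--Schmidt (hence strong) convergence of the kernels via a Scheff\'e/Riesz-type lemma ($f_k\to f$ pointwise, $\int f_k^p\to\int f^p$ implies $L^p$ convergence, with $\int f_k^2\to\int f^2$ supplied precisely by hypothesis~\eqref{assumption}), and (ii) convergence of the traces $\tr\widetilde\tau_k$ via \eqref{Tr}, which \emph{is} legitimate there because $\widetilde\tau_k\ge 0$. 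Without this (or an equivalent device such as an explicit factorisation of $\tau_k-\tau$ into a product of Hilbert--Schmidt operators converging in HS norm), your argument establishes at best Hilbert--Schmidt convergence, not the claimed $\mathcal B_1$ convergence. A secondary, smaller issue: the identification of $\|\cdot\|_{\mathcal B_1^{(k)}}$ on the $k$-dependent space with a norm on a fixed space must be done explicitly (the paper uses the unitary $\phi\mapsto\sqrt{h^{(k)}}\phi$ and then shows the replacement of $h(\hat{\mathcal{N}}_k+\nu/\sqrt k)$ by $h(\hat{\mathcal{N}}_\infty)$ costs nothing in the limit); you gesture at this but give no mechanism.
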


Notice that 
Theorem \ref{thm:1} implies that $ \n \in \sigma(\mathcal{N}) $ almost surely. For this reason the denominator $h(\n)$  is almost surely strictly positive.    Assumption  \eqref{assumption} is natural in the context of the Bernstein - von Mises Theorem  \cite[Theorem 21]{Ferguson}.

\begin{remark}
\label{rem:gauge}
Suppose that Assumption~\ref{ass:phase} is not satisfied and consider the polar decomposition of $V_\xi(\nu)$:
\begin{equation}
\label{Vdecomposition}
V_\xi(\nu) = \exp(-i \theta(\nu | \xi)) \exp( \frac{1}{2} l(\nu | \xi)),
\end{equation}
where $\theta(\nu\vert\xi)$ is a phase. Assume that the function $\theta(\nu | \xi)$ is twice continuously  differentiable in $\nu$, for almost all $\xi$, and that for every $\nu' \in \mathbb{R}$ there exists a function $g \in L^1(\mathcal{X}, d \mu_{\nu'}(\xi))$ such that $|\partial_\nu^2\theta(\nu | \xi)| < g(\xi)$, for all $\nu \in \mathbb{R}$. Then the conclusion of Theorem~\ref{T:CLT2} would be 
\begin{multline*}
\frac{1}{\sqrt{k}} e^{i \sqrt{k} \partial_\nu \theta_k(\hat{\mathcal{N}}_k) (\nu - \nu')}   { \rho}_k 
  ( \hat{\mathcal{N}}_k + \frac{\nu}{\sqrt{k}} , \hat{\mathcal{N}}_k + \frac{\nu'}{\sqrt{k}}) \\ \quad   \to \quad      \frac{ c_{\rho}( \n ) }{h(\n)}  \boldsymbol{G}_{F(\n)}(\nu, \nu') \e^{-\frac{i}{2} \mathbb{E}_{\n  } [ \partial_{\nu}^2 \theta( \n ) |\xi)]^2(  \nu^2  -\nu'^2  )}
\end{multline*}
in the same topology as specified in the theorem.
\end{remark}

\section{Law of Large Numbers}
\label{sec:LLN}
The purpose of this section is to prove the convergence results for $\nk$ and $\rho_k$, as $k\rightarrow \infty$, formulated in Theorem~\ref{thm:1}. We split the statements of this theorem into two parts. The main tool employed in the proofs of both parts will be the uniform law of large numbers, which we now recall, see \cite[Theorem 16]{Ferguson}.
\begin{theorem}\label{ULLN}
Let $M$ be a compact metric space, and let $X_j(a) \stackrel{d}{=} X(a), a \in M$, be a sequence of i.i.d. random variables.  Suppose that $X(a)$ is almost surely continuous in $a$, and assume that there is a positive random variable $g$ with a finite first moment such that $|X(a)|  < g$ holds almost surely, for all $a \in M$. Then
$$
\sup_{a \in M} |\frac{1}{N} \sum_{j=1}^N X_j(a) - \mathbb{E}[X(a)]| \quad  \stackrel{a.s.}{\longrightarrow} \quad 0,
$$
as $N$ tends to infinity.
\end{theorem}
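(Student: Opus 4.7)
The plan is to sandwich the continuous process $a \mapsto X(a)$ between finitely many bounded random variables whose averages converge by the ordinary strong law of large numbers (SLLN), reducing uniform convergence on $M$ to finitely many pointwise convergences together with an equicontinuity bound. As a preliminary step I would verify that $m(a) := \mathbb{E}[X(a)]$ is continuous, hence (by compactness of $M$) uniformly continuous: if $a_n \to a$, then $X(a_n) \to X(a)$ almost surely by hypothesis, and $|X(a_n)| \leq g$ with $\mathbb{E}[g] < \infty$, so dominated convergence gives $m(a_n) \to m(a)$.

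For each $a \in M$ and $\delta > 0$, I would introduce the local envelopes
\begin{equation*}
\overline X(a, \delta) := \sup_{a' \in B(a, \delta)} X(a'), \qquad \underline X(a, \delta) := \inf_{a' \in B(a, \delta)} X(a').
\end{equation*}
Both envelopes are dominated in absolute value by $g$, and by almost-sure continuity of $X(\cdot)$ at $a$ they converge almost surely to $X(a)$ as $\delta \downarrow 0$. Dominated convergence then yields $\mathbb{E}[\overline X(a, \delta)] \to m(a)$ and $\mathbb{E}[\underline X(a, \delta)] \to m(a)$. Fix $\epsilon > 0$. For each $a \in M$, pick $\delta(a) > 0$ small enough that $\mathbb{E}[\overline X(a, \delta(a))] - m(a) < \epsilon$, $m(a) - \mathbb{E}[\underline X(a, \delta(a))] < \epsilon$, and $|m(a') - m(a)| < \epsilon$ for all $a' \in B(a, \delta(a))$. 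Compactness of $M$ extracts a finite subcover $B(a_1, \delta(a_1)), \ldots, B(a_r, \delta(a_r))$.

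Applying the ordinary SLLN to the $2r$ i.i.d. real-valued sequences $\{\overline X_j(a_i, \delta(a_i))\}_{j \geq 1}$ and $\{\underline X_j(a_i, \delta(a_i))\}_{j \geq 1}$, $i = 1, \ldots, r$, produces an event $\Omega_\epsilon$ of full probability on which all $2r$ averages converge to their expectations simultaneously. On $\Omega_\epsilon$, for any $a \in M$, choose $i$ with $a \in B(a_i, \delta(a_i))$ and use the pathwise sandwich $\underline X_j(a_i, \delta(a_i)) \leq X_j(a) \leq \overline X_j(a_i, \delta(a_i))$ valid for every $j$; together with the $\epsilon$-approximations of $m(a)$ and $m(a_i)$ this yields $\bigl|N^{-1}\sum_{j=1}^N X_j(a) - m(a)\bigr| \leq 3\epsilon$ for all sufficiently large $N$, uniformly in $a \in M$. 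Intersecting over a countable sequence $\epsilon_n \downarrow 0$ gives the claimed almost-sure uniform convergence. The main subtlety is the measurability of the envelopes $\overline X(a, \delta)$ and $\underline X(a, \delta)$, which as suprema and infima over uncountable sets are not Borel-measurable a priori; this is resolved by separability of the compact metric space $M$ combined with almost-sure continuity of $X(\cdot)$, which lets us restrict the sup and inf to a countable dense subset of $B(a, \delta)$ without changing their values outside a null set.
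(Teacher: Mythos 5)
Your proof is correct and is essentially the classical Wald--Ferguson compactness-plus-envelope argument; the paper does not prove this theorem itself but simply quotes it as Theorem~16 of Ferguson, whose proof proceeds exactly as you describe (local $\sup$/$\inf$ envelopes dominated by $g$, dominated convergence as $\delta \downarrow 0$, a finite subcover of $M$, and the ordinary SLLN applied to the finitely many envelope sequences). Your closing remark on the measurability of the envelopes, handled via separability of $M$ together with almost-sure continuity of the sample paths, correctly disposes of the one genuine technical subtlety.
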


The next lemma claims a convergence result for $\nk$ with respect to the measure $\mu_\nu^{\otimes \mathbb{N}}$. This is a classical result first proven in \cite{Wald}. We follow a proof given in \cite{Ferguson}.

\begin{lemma} \label{Lprinc}
Suppose that Assumption~\ref{ass:LLN} holds true. Then   
$$\lim_{k \to \infty}  \hat{\mathcal{N}}_k  = \nu,  $$
almost surely with respect to  $ \mu_\nu^{\otimes \mathbb{N}} $.
\end{lemma}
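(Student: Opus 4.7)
The plan is to recognize Lemma~\ref{Lprinc} as the classical consistency statement for the maximum-likelihood estimator and to execute the standard three-step strategy: (i)~a uniform strong law of large numbers for the log-likelihood, (ii)~identification of its pointwise limit as a function with a strict unique maximum at the true parameter $\nu$, and (iii)~a well-separated-maximum argument to conclude.

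For step~(i), I would fix $\nu \in \sigma(\mathcal{N})$, work under $\mu_\nu^{\otimes \mathbb{N}}$, and apply Theorem~\ref{ULLN} with $M = \sigma(\mathcal{N})$ (compact because $\mathcal{N}$ is bounded) to the i.i.d.\ random functions $X_j(\nu') := l(\nu'\,|\,\xi_j)$, $\xi_j \sim \mu_\nu$. The three hypotheses are supplied verbatim by Assumption~\ref{ass:LLN}: continuity of $\nu' \mapsto l(\nu'|\xi)$ on $\sigma(\mathcal{N})$ comes from Assumption~\ref{ass:LLN}.2, after noting that the dominance condition Assumption~\ref{ass:LLN}.3 forces $f(\xi|\nu') > 0$ simultaneously for all $\nu' \in \sigma(\mathcal{N})$ off a $\mu_\nu$-null set, and that same dominance furnishes the integrable envelope $g(\xi) = \sup_{\nu'} |l(\nu'|\xi)|$. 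The ULLN then yields
$$
\sup_{\nu' \in \sigma(\mathcal{N})} \big|\, l_k(\nu') - K(\nu') \,\big| \;\xrightarrow[k \to \infty]{\mathrm{a.s.}}\; 0, \qquad K(\nu') := \mathbb{E}_\nu [l(\nu'|\xi)].
$$

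For step~(ii), Jensen's inequality applied to the concave logarithm gives
$$
K(\nu') - K(\nu) = \mathbb{E}_\nu\!\left[\log \frac{f(\xi|\nu')}{f(\xi|\nu)}\right] \;\le\; \log \mathbb{E}_\nu\!\left[\frac{f(\xi|\nu')}{f(\xi|\nu)}\right] = 0,
$$
with equality iff $f(\xi|\nu') = f(\xi|\nu)$ for $\mu$-a.e.\ $\xi$, which by identifiability (Assumption~\ref{ass:LLN}.1) forces $\nu' = \nu$. Dominated convergence against the envelope from Assumption~\ref{ass:LLN}.3 makes $K$ continuous on the compact set $\sigma(\mathcal{N})$, so that for every open neighborhood $U \ni \nu$ the separation
$$
\delta_U := K(\nu) - \sup_{\nu' \in \sigma(\mathcal{N}) \setminus U} K(\nu')
$$
is strictly positive.

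For step~(iii), the defining property of the MLE gives $l_k(\hat{\mathcal{N}}_k) \geq l_k(\nu)$, so with $\varepsilon_k := \sup_{\nu'}|l_k(\nu') - K(\nu')|$,
$$
K(\hat{\mathcal{N}}_k) \;\ge\; l_k(\hat{\mathcal{N}}_k) - \varepsilon_k \;\ge\; l_k(\nu) - \varepsilon_k \;\ge\; K(\nu) - 2\varepsilon_k.
$$
Since $\varepsilon_k \to 0$ almost surely, eventually $K(\hat{\mathcal{N}}_k) > K(\nu) - \delta_U$, which forces $\hat{\mathcal{N}}_k \in U$; as $U$ was arbitrary, $\hat{\mathcal{N}}_k \to \nu$ almost surely. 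The only genuinely delicate aspect I anticipate is the measurability/continuity housekeeping behind step~(i): one must verify that $l(\nu'|\xi)$ is jointly Borel and, off a $\mu_\nu$-null set of $\xi$'s, continuous in $\nu'$ on all of $\sigma(\mathcal{N})$, and that the $\mathrm{argmax}$ selection in~\eqref{maxlikelihood} is itself Borel; both are resolved by Assumption~\ref{ass:LLN} combined with the measurable maximum theorem the authors already invoked after~\eqref{maxlikelihood}.
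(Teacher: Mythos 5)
Your proposal is correct and follows essentially the same route as the paper's proof: a uniform law of large numbers for $l_k$ over the compact set $\sigma(\mathcal{N})$, identifiability via Jensen's inequality to get a strictly separated maximum of $\nu' \mapsto \mathbb{E}_\nu[l(\nu'|\xi)]$ at $\nu$, and a well-separated-maximum argument to localize $\hat{\mathcal{N}}_k$ in any neighborhood $U$ of $\nu$. The paper packages the last step through the empirical relative entropy $S_k(\nu \,|\, U^c) \to S(\nu \,|\, U^c) > 0$ rather than through your separation constant $\delta_U$, but this is only a difference in bookkeeping, not in substance.
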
 
\begin{proof}
We fix $\nu \in \sigma(\mathcal{N})$ and consider the random variable $l_k(\nu')$ -- see Eq.~(\ref{eq:likelihood}) -- on the measure space $(\Xi, \mu^{\otimes\mathbb{N}}_\nu)$. Given Assumption~\ref{ass:LLN}, all conditions needed to apply the uniform law of large numbers for $l_k(\nu)$ are satisfied, and we have that
$$
\sup_{\nu' \in \sigma(\mathcal{N})}  | l_k(\nu'|\underline{\xi}) - \mathbb{E}_\nu[l(\nu' | \xi)]| \quad \to \quad0\,, \quad \mu^{\otimes\mathbb{N}}_\nu-\mbox{almost surely}.
$$
In particular, for any closed subset $N$ of $\sigma(\mathcal{N})$,
$$S_k(\nu |N) =\min_{\nu' \in N}   (l_k(\nu) - l_k(\nu')) $$ 
converges almost surely to $S(\nu | N)$. \\
Let $U$ be an open neighborhood of $\nu$ and let $U^c$ its complement in $\sigma(\mathcal{N})$. Then, by Assumption~\ref{ass:LLN}.1, $S(\nu| U^c) > 0$, and we conclude that $\mu^{\otimes\mathbb{N}}_\nu$-almost surely there exists $k_0  \equiv k_0(\underline \xi) $ such that $S_k(\nu| U^c) > 0$, for all $k > k_0$. By definition, 
$$0 = S_k(\nu | \sigma(\mathcal{N}))  = S_k(\nu| \hat{\mathcal{N}}_k),$$ 
whence $\hat{\mathcal{N}}_k$ belongs to $U$ almost surely. 
It follows that $\hat{\mathcal{N}}_k$ converges to $\nu$ almost surely.
\end{proof}

The convergence result for $\nk$ (Eq.~(\ref{t22.1}) in Theorem~\ref{thm:1}) is a direct consequence of the lemma.

\begin{proposition}
\label{thm:1a}
Given Assumption~\ref{ass:LLN} the maximum likelihood estimator $\hat{\mathcal{N}}_k$ converges almost surely, as $k\rightarrow \infty$, and, for any Borel set $N \subset \sigma(\mathcal{N})$,
\begin{equation}\label{ss}
\mathbb P_\rho (\underline \xi : \lim_{k \to \infty} \hat{\mathcal{N}}_k \in N) = \tr( \Pi(N) \rho).
\end{equation}
\end{proposition}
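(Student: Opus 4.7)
The proof will be a short combination of Lemma~\ref{Lprinc} and the de Finetti decomposition \eqref{eq:5}. The guiding observation is that \eqref{eq:5}, which the excerpt states only for cylinder sets, extends to all of $\Sigma$: both sides of \eqref{eq:5} define probability measures on $(\Xi,\Sigma)$ that agree on the $\pi$-system of cylinder sets, so by a standard monotone class / uniqueness argument one has
\[
\mathbb{P}_\rho(A) \;=\; \int_{\sigma(\mathcal{N})}\mu_\nu^{\otimes\mathbb{N}}(A)\,\d\lambda_\rho(\nu)
\]
for every $A\in\Sigma$. Thus $\mathbb{P}_\rho$ is an average of the product measures $\mu_\nu^{\otimes\mathbb{N}}$ against the spectral measure $\lambda_\rho$, and $\lambda_\rho$ is a probability measure on $\sigma(\mathcal{N})$ since $\tr(\rho)=1$.

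First I would establish almost sure convergence. Let $A_\infty := \{\underline\xi \in \Xi \,:\, \hat{\mathcal{N}}_k(\underline\xi)\text{ converges as }k\to\infty\}$. This set is measurable because each $\hat{\mathcal{N}}_k$ is (by the measurable selection result invoked after \eqref{maxlikelihood}). By Lemma~\ref{Lprinc}, for every $\nu\in\sigma(\mathcal{N})$ one has $\mu_\nu^{\otimes\mathbb{N}}(A_\infty)=1$. The extended de Finetti decomposition therefore gives $\mathbb{P}_\rho(A_\infty)=\int \d\lambda_\rho(\nu) = 1$, so $\hat{\mathcal{N}}_\infty := \lim_k \hat{\mathcal{N}}_k$ exists $\mathbb{P}_\rho$-almost surely and defines a measurable random variable on $(\Xi,\Sigma,\mathbb{P}_\rho)$.

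Next I would compute the distribution of $\hat{\mathcal{N}}_\infty$. Fix a Borel set $N\subset\sigma(\mathcal{N})$, and let
\[
B_N := \bigl\{\underline\xi \in A_\infty \,:\, \hat{\mathcal{N}}_\infty(\underline\xi)\in N\bigr\}\,\cup\,(\Xi\setminus A_\infty),
\]
so that $\mathbb{P}_\rho(B_N)=\mathbb{P}_\rho(\underline\xi: \hat{\mathcal{N}}_\infty\in N)$. Lemma~\ref{Lprinc} identifies the $\mu_\nu^{\otimes\mathbb{N}}$-measure of $B_N$ exactly: for $\nu\in N$ the limit equals $\nu\in N$, $\mu_\nu^{\otimes\mathbb{N}}$-a.s., so $\mu_\nu^{\otimes\mathbb{N}}(B_N)=1$; for $\nu\notin N$ the limit equals $\nu\notin N$, $\mu_\nu^{\otimes\mathbb{N}}$-a.s., so $\mu_\nu^{\otimes\mathbb{N}}(B_N)=0$. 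In other words $\mu_\nu^{\otimes\mathbb{N}}(B_N)=\mathbf 1_N(\nu)$. Inserting this into the extended \eqref{eq:5} yields
\[
\mathbb{P}_\rho(B_N)=\int_{\sigma(\mathcal{N})}\mathbf 1_N(\nu)\,\d\lambda_\rho(\nu)=\lambda_\rho(N)=\tr\bigl(\Pi(N)\rho\bigr),
\]
which is precisely \eqref{ss}.

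\textbf{Main obstacle.} There is no deep analytic difficulty; everything reduces to conditioning via de Finetti. The one technical point one must not skip is the extension of \eqref{eq:5} from the $\pi$-system of cylinder sets to all of $\Sigma$, together with verifying that $A_\infty$ and $B_N$ are indeed in $\Sigma$ (the latter uses the measurable selection of $\hat{\mathcal{N}}_k$ from \cite{AliprantisBorder}). Once these measurability points are granted, the proposition is an immediate corollary of Lemma~\ref{Lprinc}.
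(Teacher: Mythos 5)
Your proposal is correct and follows essentially the same route as the paper: use Lemma~\ref{Lprinc} to get $\mu_\nu^{\otimes\mathbb{N}}$-almost-sure convergence to $\nu$, then integrate against $\d\lambda_\rho$ via the de Finetti decomposition \eqref{eq:5} to obtain both the almost-sure convergence under $\mathbb{P}_\rho$ and the identity \eqref{ss}. The only difference is that you spell out the extension of \eqref{eq:5} from cylinder sets to all of $\Sigma$ and the measurability of the relevant sets, points the paper treats more tersely; this is a welcome but inessential refinement.
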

\begin{proof}
By $\mathcal{A}$ we denote the set of points $\underline{\xi} $ for which 
$  \lim_{k \to \infty} \hat{\mathcal{N}}_k( \underline{\xi} )  $ exists. The limit is denoted by $ \hat{\mathcal{N}}_\infty( \underline{\xi} )   $. Since 
$ \limsup_{k} \hat{\mathcal{N}}_k  $  and $ \liminf_{k} \hat{\mathcal{N}}_k  $ are measurable functions, and since 
$\mathcal{A}$ is the set of points on which these two functions coincide, the set $\mathcal{A}$ is measurable.  Lemma \ref{Lprinc} 
 implies that
  $\mu_\nu^{\otimes \mathbb{N}}( \mathcal{A}  ) = 1 $,  for every $\nu$.
Almost sure convergence with respect to $\mathbb{P}_{\rho}$, i.e. $ \mathbb{P}_{\rho}(\mathcal{A}) = 1  $, then follows from Eq.~(\ref{eq:5}): 
\begin{align*}
\mathbb P_\rho (  \mathcal{A}) = \int_{\sigma(\mathcal{N})}\mu^{\otimes\mathbb{N}}_\nu( \mathcal{A} ) \d \lambda_\rho(\nu) = \int_{  \sigma(\mathcal{N}) } \d \lambda_\rho(\nu) = 1.
\end{align*}
Next, we prove Eq. \eqref{ss} using Eq.~(\ref{eq:5}): 
\begin{align*}
\mathbb P_\rho \Big (  \n^{-1} ( N ) \Big ) = \int_{\sigma(\mathcal{N})}\mu^{\otimes\mathbb{N}}_\nu  \Big (  \n^{-1} ( N ) \Big ) \d \lambda_\rho(\nu) = \int_N \d \lambda_\rho(\nu) = \tr( \Pi(N) \rho),
\end{align*}
where we use that $\n =  \nu$, $ \mu^{\otimes\mathbb{N}}_\nu $-almost surely; (see Lemma \ref{Lprinc}).
\end{proof}

The second part of Theorem~\ref{thm:1}, concerning the speed of concentration of $\rho_k$ around $\hat{\mathcal{N}}_\infty$, is the content of the following proposition. 
%In the case of discrete spectrum, there are various versions of this statement \cite{ BaBeBe1, BaBeBe2, BFFS}, in particular Hoeffding-type inequalities for finite $k$ can be found in \cite{BFFS}.

\begin{proposition}
\label{thm:LLN}
We require Assumption ~\ref{ass:LLN}. Let $N$ be the closure of an arbitrary open subset $O$ of the spectrum $\sigma(\mathcal{N})$ of the operator $\mathcal{N}$.  Suppose that $N$ is contained in the support of the measure $\lambda_\rho$. Then
$$
 -\lim_{k \to \infty} \frac{1}{k} \log \tr( \Pi(N) \rho_k) = S(  \hat{\mathcal{N}}_\infty   | N),  \quad \mathbb{P}_\rho-\mbox{almost surely}.
$$
\end{proposition}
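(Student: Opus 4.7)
The plan is to first convert the problem into a classical Laplace-type asymptotic. Because each $V_{\xi_j}$ is a function of $\mathcal{N}$, the operator $V_k^* V_k$ acts through functional calculus as multiplication by $\prod_j |V_{\xi_j}(\nu)|^2 = e^{k l_k(\nu)}$, so by the spectral theorem
\begin{equation*}
\tr(\Pi(N)\rho_k) \;=\; \frac{\int_N e^{k l_k(\nu)}\, \d \lambda_\rho(\nu)}{\int_{\sigma(\mathcal{N})} e^{k l_k(\nu)}\, \d \lambda_\rho(\nu)}.
\end{equation*}
Thus $-\tfrac1k \log \tr(\Pi(N)\rho_k)$ is the difference of two terms of the form $\tfrac1k \log \int_M e^{k l_k(\nu)}\,\d\lambda_\rho(\nu)$, and the task reduces to a Varadhan-type evaluation of these integrals.

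Next I would import the uniform convergence already obtained in the proof of \cref{Lprinc}: under $\mu_\nu^{\otimes \mathbb N}$, Assumption~\ref{ass:LLN} and \cref{ULLN} give $\sup_{\nu'\in\sigma(\mathcal N)}|l_k(\nu')-L_\nu(\nu')| \to 0$ almost surely, where $L_\nu(\nu'):=\mathbb E_\nu[l(\nu'|\xi)]$ is continuous on $\sigma(\mathcal N)$ (by the dominance hypothesis and dominated convergence). Since by \cref{Lprinc} and \eqref{t22.1} the random variable $\hat{\mathcal N}_\infty$ equals $\nu$ under $\mu_\nu^{\otimes\mathbb N}$ and has distribution $\lambda_\rho$ under $\mathbb P_\rho$, the de~Finetti decomposition \eqref{eq:5} lets me transport this statement: $\mathbb P_\rho$-a.s. $l_k \to L_{\hat{\mathcal N}_\infty}$ uniformly on $\sigma(\mathcal N)$, and moreover $\hat{\mathcal N}_\infty\in\supp(\lambda_\rho)$.

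The third step is a matching upper and lower Laplace estimate for $M \in \{N,\,\sigma(\mathcal N)\}$. The upper bound is immediate from uniform convergence,
\begin{equation*}
\frac1k \log \int_M e^{k l_k(\nu)}\,\d\lambda_\rho(\nu) \;\le\; \sup_{\nu\in M} l_k(\nu) + \frac1k \log \lambda_\rho(M) \;\longrightarrow\; \sup_{\nu\in M} L_{\hat{\mathcal N}_\infty}(\nu),
\end{equation*}
provided $\lambda_\rho(M) > 0$; for $M=N=\bar O$ this holds because $O$ is a nonempty open subset of $\supp(\lambda_\rho)$, and for $M=\sigma(\mathcal N)$ it is trivial. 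The matching lower bound is obtained by picking, for any $\eps>0$, a point $\nu^\star$ with $L_{\hat{\mathcal N}_\infty}(\nu^\star) > \sup_M L_{\hat{\mathcal N}_\infty} - \eps$ and an open neighborhood $U\subset M$ of $\nu^\star$ on which $L_{\hat{\mathcal N}_\infty}>\sup_M L_{\hat{\mathcal N}_\infty}-2\eps$; for $M=N$ we choose $\nu^\star\in O$ (possible by continuity of $L_{\hat{\mathcal N}_\infty}$ and $N=\bar O$), and for $M=\sigma(\mathcal N)$ we take $\nu^\star=\hat{\mathcal N}_\infty$. In both cases $U$ is open inside $\supp(\lambda_\rho)$, hence $\lambda_\rho(U)>0$, and
\begin{equation*}
\frac1k \log\int_M e^{k l_k(\nu)}\,\d\lambda_\rho(\nu) \;\ge\; \inf_{\nu\in U} l_k(\nu) + \frac1k\log\lambda_\rho(U) \;\longrightarrow\; \inf_{\nu\in U} L_{\hat{\mathcal N}_\infty}(\nu),
\end{equation*}
which is within $2\eps$ of the claimed supremum.

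Combining these two Laplace estimates yields
\begin{equation*}
-\frac1k \log \tr(\Pi(N)\rho_k) \;\longrightarrow\; L_{\hat{\mathcal N}_\infty}(\hat{\mathcal N}_\infty) - \sup_{\nu\in N} L_{\hat{\mathcal N}_\infty}(\nu) \;=\; S(\hat{\mathcal N}_\infty \mid N),
\end{equation*}
where the final equality is the definition \eqref{eq:8}. The subtle step, and the place where the hypothesis that $N$ is the closure of an open set lying in $\supp(\lambda_\rho)$ is essential, is the Laplace lower bound: without this, the near-maximizers of $L_{\hat{\mathcal N}_\infty}$ on $N$ might be unseen by $\lambda_\rho$, and $\frac1k\log\int_N e^{kl_k}d\lambda_\rho$ could lag behind $\sup_N L_{\hat{\mathcal N}_\infty}$, destroying the clean large deviation rate.
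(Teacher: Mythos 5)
Your proposal is correct and follows essentially the same route as the paper's proof of Proposition~\ref{thm:LLN}: the same reduction of $\tr(\Pi(N)\rho_k)$ to a ratio of Laplace-type integrals, the same use of the uniform law of large numbers for $l_k$, and the same matching upper/lower bounds in which the lower bound exploits that $N=\bar O$ lies in $\supp(\lambda_\rho)$ so that a neighborhood of a near-maximizer has positive $\lambda_\rho$-measure. The only (cosmetic) difference is that you transport the uniform convergence to the $\mathbb{P}_\rho$ level before doing the Laplace estimates, whereas the paper carries out the estimates under each $\mu_{\nu'}^{\otimes\mathbb N}$ and transports the final limit via the de Finetti decomposition afterwards.
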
 
\begin{proof}
We prove the Proposition in two steps. \\
{\bf Step 1:} We prove that, for every $\nu'$  in the support of the measure $\lambda_\rho$,
\begin{align}\label{lab}
 -\lim_{k \to \infty} \frac{1}{k} \log \tr( \Pi(N) \rho_k) = S( \nu' | N), \quad \mu^{\otimes\mathbb{N}}_{\nu'}-\mbox{almost surely}.
\end{align}
The quantity $\tr( \Pi(N) \rho_k)$ can be expressed in terms of the likelihood function as
$$
\tr(\Pi(N) \rho_k) = \frac{\int_N \d \lambda_\rho(\nu) \e^{k l_k(\nu)}}{\int_{\sigma(\mathcal{N})} \d \lambda_\rho(\nu) \e^{k l_k(\nu)}}.
$$
Let $I := \max_{ \nu \in N} \mathbb{E}_{\nu'}[l(\nu | \xi)]$. We note that Eq. \eqref{lab} follows from
$$
 \lim_{k \to \infty} \frac{1}{k} \log \int_N \d \lambda_\rho(\nu) \e^{k l_k(\nu)} = I, \quad \mu^{\otimes\mathbb{N}}_{\nu'}-\mbox{almost surely},
$$ 
since  $  \mathbb{E}_{\nu'}[l(\nu' | \xi)] = \max_{ \nu \in \sigma(\mathcal{N})} \mathbb{E}_{\nu'}[l(\nu | \xi)]$.   
We fix some $\varepsilon > 0$ and show that, for large enough $k$, 
$$
\Big| \frac{1}{k} \log \int_N \d \lambda_\rho(\nu) \e^{k l_k(\nu)}e^{-Ik} \Big | \leq \varepsilon,
$$
$\mu^{\otimes\mathbb{N}}_{\nu'}$-almost surely.\\
Bound from above: Since $l_k(\nu | \underline{\xi})$ converges uniformly to $\mathbb{E}_{\nu'   }[l(\nu | \xi)]$, by the uniform law of large numbers, we have that $\mu^{\otimes\mathbb{N}}_{\nu'}$-almost surely  
 $l_k(\nu | \underline{\xi}) -I \leq \varepsilon$, for large $k \equiv  k(\underline \xi)$ and all $\nu \in N$. Then
$$
\frac{1}{k} \log \int_N \d \lambda_\rho(\nu) \e^{k l_k(\nu| \underline{\xi})}e^{-I k} \leq \frac{1}{k} \log \int_N \d \lambda_\rho(\nu) \e^{k \varepsilon} \leq \varepsilon.
$$
Bound from below: Consider the open set  
$$U(\varepsilon) := \{\nu \in \sigma(\mathcal{N}) : I - \mathbb{E}_{\nu' }[l(\nu | \xi)]  < \varepsilon \}.$$
Since $I$ is the maximum over $\nu \in N$ of \,$\mathbb{E}_{\nu'}[l(\nu | \xi)]$, we have that \mbox{$ N \cap U(\varepsilon) \ne \emptyset $}. Since $N = \overline{O}$, the interior of  $ N \cap U(\varepsilon) $ is non-empty. Thus, since $ N $ is contained in the support of $ \lambda_{\rho} $, it follows that $ \lambda_{\rho}( N \cap U(\varepsilon)  ) > 0 $, for every $\varepsilon >0$. By the uniform law of large numbers, 
$  \big | \mathbb{E}_{\nu' }[l(\nu | \xi)] - l_k(\nu | \underline{\xi}) \big |  < \varepsilon/4$, for all $\nu$ and for sufficiently large $k \equiv  k(\underline \xi)$, $\mu^{\otimes\mathbb{N}}_{\nu'} $-almost surely.  It follows that
$I - l_k(\nu | \underline{\xi}) < \varepsilon / 2  $, 
$\mu^{\otimes\mathbb{N}}_{\nu'} $-almost surely,   for every $\nu \in U(\varepsilon/4)$ and for sufficiently large $k \equiv  k(\underline \xi)$. Hence 
\begin{align*}
\frac{1}{k} \log \int_N \d \lambda_\rho(\nu) \e^{k l_k(\nu| \underline{\xi})}e^{-I k} & \geq \frac{1}{k} \log \int_{N \cap U(\varepsilon/4)} \d \lambda_\rho(\nu) \e^{-k \varepsilon /2 } \\ & = -\varepsilon/2 + \frac{1}{k} \log \lambda_{\rho}(U(\varepsilon/4))  > - \varepsilon,
\end{align*}
if we choose $k >  2 \varepsilon^{-1} |\log \lambda_{\rho}(U(\varepsilon/4))|$; (note that $N \cap U(\varepsilon/2)$ has a strictly positive measure). \\
{\bf Step 2: } We now prove the proposition.
We set 
\begin{align}
 \mathcal{B} : = \Big \{  \underline \xi \Big |
 -\lim_{k \to \infty} \frac{1}{k} \log \tr( \Pi(N) \rho_k ( \underline \xi )) = S(  \hat{\mathcal{N}}_\infty( \underline \xi )   | N)    \Big  \}.  
\end{align}  
Because the set where $ \lim_{k \to \infty} \frac{1}{k} \log \tr( \Pi(N) \rho_k ( \underline \xi )) $ exists  is measurable, and the function
$S(  \hat{\mathcal{N}}_\infty( \underline \xi )   | N)$ is measurable, the set $ \mathcal{B} $ is measurable. 
Moreover,  Lemma~\ref{Lprinc} implies that   $ \n = \nu' $, $\mu^{\otimes\mathbb{N}}_{\nu'} $-almost surely, which when combined with Step 1 proves that   
\begin{align}\label{labprima}
 \lim_{k \to \infty} -\frac{1}{k} \log \tr( \Pi(N) \rho_k) = 
 S( \nu' | N) = S( \n | N), \quad \mu^{\otimes\mathbb{N}}_{\nu'}-\mbox{almost surely}.
\end{align}
We conclude that $  \mu^{\otimes\mathbb{N}}_{\nu'}(\mathcal{B}) = 1  $, for every 
$ \nu' $ in the support of $\lambda_{\rho}$. Finally, Eq \eqref{eq:5} implies the desired result, $\mathbb P_\rho (  \mathcal{B})=1$.
\end{proof}

\section{Central Limit Theorem}
\label{sec:CLT}
In this section we prove Theorems~\ref{T:CLT} and \ref{T:CLT2}  and conclude by discussing some extensions of our results.  
\subsection{Proof of Theorem~\ref{T:CLT}}
The proof is an adaptation, to our setting, of the proof of a theorem  due  to Cram\'{e}r: see  \cite[Theorem~18]{Ferguson}. 
\begin{proof}[Proof of Theorem~\ref{T:CLT}]
We first fix $\upsilon \in \sigma(\mathcal{N})$.
To understand the relevance of the Fisher Information, one notes that, since $\partial_\nu l_k(\hat{\mathcal{N}}_k)=0$, the mean value theorem implies 
\[
\partial_\nu l_k(\upsilon )= \partial_\nu {l}_k(\upsilon )-\partial_\nu {l}_k(\hat{\mathcal{N}}_k)=-\partial_\nu^2 {l}_k(\nu')[\hat{\mathcal{N}}_k-\upsilon ],
\]  
for some $\nu'$ in the interval between $\upsilon $ and $\nk$.  As a consequence of the central limit theorem, the assumption that the Fisher information is finite implies that $\sqrt{k} \partial_\nu {l}_k(\upsilon )$ converges in distribution (with respect to $\mu_{\upsilon}^{\otimes \mathbb{N}}$)
to a Gaussian random variable with  mean $0$ and variance $ \mathbb{E}_{\upsilon}[(\partial_\nu {l}(\upsilon | \xi))^2)] $; (recall that  $\mathbb{E}_{\upsilon}( \partial_\nu l(\upsilon | \xi)) = 0  $).  

By the uniform law of large numbers
and the fact that $\nk$ converges to $\upsilon$ 
(almost surely with respect to  $\mu_{\upsilon}^{\otimes \mathbb{N}}$), $\partial_\nu^2 {l}_k(\nu')$ 
converges to $\mathbb{E}_{\upsilon}[\partial_\nu^2 {l} (\upsilon | \xi)]$. Furthermore, one may check, using Assumption~\ref{ass:CLT}.2, that, for any point 
$\nu$, $\mathbb{E}_{\nu}[\partial_\nu^2 {l}(\nu |\xi)]=-\mathbb{E}_{\nu}[(\partial_\nu {l}(\nu | \xi))^2)]$.  Combining this with the last statement of the previous paragraph we arrive at the following convergence result:
\begin{equation}\label{corrr1}
\sqrt{k}[\hat{\mathcal{N}}_k - \upsilon] \quad \stackrel{d}\longrightarrow \quad \scrN(0 ,F^{- 1}(\upsilon)),
\end{equation}
where the above convergence is in distribution with respect to the measure $\mu_{\upsilon}^{\otimes \mathbb{N}}$. Take $a \in \mathbb{R} $ and set $A_k = \Big ( \sqrt{k}\big ( \hat{\mathcal{N}}_k - \n \big ) \Big )^{-1} \big ( (- \infty, a] \big )$ , and
$$
 g_k (\upsilon) :=   \mu_{\upsilon}^{\otimes \mathbb{N}}\Big (  \Big (  \sqrt{k}\big ( \hat{\mathcal{N}}_k - \upsilon \big ) \Big )^{-1} \big ( (- \infty, a] \big ) \Big ) . $$  Eq. \eqref{eq:5} implies that 
\begin{equation}\label{corrr2}
\mathbb P_\rho(A_k) = \int_{\sigma(\mathcal{N})} \d \lambda_\rho (\upsilon) \mu_{\upsilon}^{\otimes \mathbb{N}}(A_k) = \int_{\sigma(\mathcal{N})} \d \lambda_\rho (\upsilon) g_k (\upsilon)  , 
\end{equation}
where we used Lemma \ref{Lprinc}. Finally, the desired result follows from the Lebesgue dominated convergence theorem and \eqref{corrr1}, \eqref{corrr2}.

%To understand the relevance of the Fisher Information, one notes that, since $\partial_\nu l_k(\hat{\mathcal{N}}_k)=0$, the mean value theorem implies
%\[
%\partial_\nu l_k(\hat{\mathcal{N}}_\infty)= \partial_\nu {l}_k(\hat{\mathcal{N}}_\infty)-\partial_\nu {l}_k(\hat{\mathcal{N}}_k)=-\partial_\nu^2 {l}_k(\nu')[\hat{\mathcal{N}}_k-\hat{\mathcal{N}}_\infty],
%\]
%for some $\nu'$ in the interval between $\n$ and $\nk$.  The assumption that the Fisher information is finite implies, by the central limit theorem, that $\sqrt{k} \partial_\nu {l}_k(\hat{\mathcal{N}}_\infty)$ converges \textcolor{magenta}{in distribution} to a Gaussian random variable with \textcolor{magenta}{  mean $0$ and variance $ \E_{\n}[(\partial_\nu {l}(\n | \xi))^2)] $}. \textcolor{magenta}{Miguel: Notive that the } 
%
%It is then easy to prove the claim.
%By the Uniform Law of Large Numbers and the fact that $\nk$ converges to $\n$, $\partial_\nu^2 {l}_k(\nu')$ converges to $\E_{\n}[\partial_\nu^2 {l} (\n | \xi)]$.  On the other hand, one may check that for any point $\nu$, $\E_{\nu}[\partial_\nu^2 {l}(\nu |\xi)]=-\E_{\nu}[(\partial_\nu {l}(\nu | \xi))^2)]$.  Combining this with the last statement of the previous paragraph we arrive at the desired result.
\end{proof}

\subsection{Proof of Theorem~\ref{T:CLT2}}\label{SecP}

We consider the isometry
\begin{equation} \label{iotatt}
\iota :  L^2(\mathbb{R},\, \mathbb{C}^n,  d \Theta)    \to   L^2(\mathbb{R},\, \mathbb{C}^n, d \lambda ),  \hspace{.5cm} \phi \mapsto \iota(\phi):=  \sqrt{h} \phi,
\end{equation}
and denote by $\iota^{-1}$ the inverse on its range extended by zero on the orthogonal subspace.
It follows that $\iota$  naturally defines  an isometry  between $ \mathcal{B}_1( L^2(\mathbb{R},\, \mathbb{C}^n,  d \Theta)  ) $ and 
$  \mathcal{B}_1( L^2(\mathbb{R},\, \mathbb{C}^n,  d \lambda )  )   $
given by 
\begin{align}\label{isoiota}
  \mathcal{B}_1( L^2(\mathbb{R},\, \mathbb{C}^n,  d \Theta)  )   \ni
     \tau  \mapsto {\boldsymbol \tau} & :=  \iota \tau \iota^{-1}
    \\ \notag & =  \sqrt{h(\nu)} \tau(\nu, \nu')\sqrt{h(\nu')}  \in 
     \mathcal{B}_1( L^2(\mathbb{R},\, \mathbb{C}^n,  d \lambda)  ) . 
\end{align}

We repeatedly use the above transformation to change $\rho_k$ into ${\boldsymbol \rho}_k$.
%In case that  $\rho   $ is a trace class positive  operator on $L^2(\mathbb{R},\, \mathbb{C}^n,  d \Theta)$, with \sout{Hermitian positive} kernel that is continuous in $  \sigma(\mathcal{N}) \times  \sigma(\mathcal{N})   $, we have that   (see Theorem 2.12 in \cite{Simon_Trace}) 
%\begin{align}\label{Trprimaprima}
%\tr (  \rho )    = \int_{\sigma(\mathcal{N})}   \tr_{\mathbb{C}^n} \Big (\rho(\nu, \nu) \Big ) h(\nu) d \lambda(\nu)
%= \tr ({\boldsymbol \rho} ) =  \int_{\mathbb{R}}   \tr_{\mathbb{C}^n} \Big (\rho(\nu, \nu) \Big ) h(\nu) d \lambda(\nu),
%\end{align}
%where $ \tr(\cdot) $ on the left denotes the trace with respect to $L^2( \sigma(\mathcal{N}) , \mathbb{C}^n,  d \Theta)  \equiv L^2( \mathbb{R} , \mathbb{C}^n,  d \Theta) $. In the middle it denotes either the trace with respect to  $L^2(\sigma(\mathcal{N}), \mathbb{C}^n, d \lambda)$ or $L^2(\mathbb{R}, \mathbb{C}^n, d \lambda)$.  
%$  \tr_{ \mathbb{C}^n }(\cdot ) $ is the trace on $n\times n$ complex matrices, $\mathbb{M}_{n}(\mathbb{C})$ .  Finally, we notice that 
%$$
%\Big \langle \phi  \Big | \rho \psi \Big \rangle_{  L^2(\sigma( \mathcal{N}) , \mathbb{C}^n,  d \Theta) } =
%\Big \langle \iota(\phi) \Big | {\boldsymbol \rho} \iota(\psi) \Big \rangle_{  L^2(\mathbb{R}, \mathbb{C}^n,  d \lambda) } ,  
%$$
%for every $ \phi, \psi \in L^2(\sigma(\mathcal{N}),\, \mathbb{C}^n,  d \Theta)$. This equation implies implies that, instead of studying $\rho_k$, we may study  
%$ {\boldsymbol \rho}_k $, which is a density matrix defined in $  L^2(\mathbb{R}, \mathbb{C}^n,  d \lambda)  $.

\begin{proposition}
\label{P:CLT2}
We require Assumptions ~\ref{ass:LLN}, \ref{ass:CLT}, \ref{ass:Post} and \ref{ass:phase}, and we assume that  
$\rho(\nu,\,\nu')$ is continuous in $(\nu,\nu') \in \sigma(\mathcal{N}) \times \sigma(\mathcal{N})$.
Suppose, moreover, that, for every $\nu$, 
\begin{align}\label{assumptionp}
\lim_{k \to \infty}e^{- k l_k(\nk)} \int_{\mathbb{R}} e^{k l_k (\nk+ \frac{\nu'}{\sqrt{k}})} \d \lambda(\nu')  = \int_{\mathbb{R}} \e^{-\frac{F(  \nu  )}{2}{\nu'}^2} \d  \lambda(\nu')
\end{align}
holds $\mu_\nu$- almost surely.
  Then, in the standard topology of trace class operators on the Hilbert space 
  $L^2(\mathbb{R},\, \mathbb{C}^n, d\lambda )$,
\begin{align} \label{Estaprima}
\lim_{k \to \infty }\frac{1}{\sqrt{k}} {\boldsymbol \rho}_k 
  ( \hat{\mathcal{N}}_k + \frac{\nu}{\sqrt{k}} , \hat{\mathcal{N}}_k + \frac{\nu'}{\sqrt{k}})   =  c_{\rho}(\n) \boldsymbol{G}_{F(\n)}(\nu, \nu')  ,
\end{align}
almost surely with respect to the measure $\mathbb{P}_{\rho}$ .
\end{proposition}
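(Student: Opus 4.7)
My plan is to reformulate the statement via a unitary rescaling, then combine (i) locally uniform convergence of rescaled integral kernels via a Taylor expansion of $l_k$ around $\hat{\mathcal{N}}_k$, (ii) control of the normalizing denominator via assumption~\eqref{assumptionp} and Scheff\'e's lemma, and (iii) an upgrade from weak to trace-norm convergence exploiting that both sides are density matrices. Define $W_k$ on $L^2(\mathbb{R},\mathbb{C}^n,d\lambda)$ by $(W_k \phi)(\nu) = k^{-1/4} \phi(\hat{\mathcal{N}}_k + \nu/\sqrt{k})$; this is a unitary, and $T_k := W_k \boldsymbol{\rho}_k W_k^{-1}$ has integral kernel precisely the LHS of \eqref{Estaprima}. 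Since $\boldsymbol{\rho}_k = \iota \rho_k \iota^{-1}$ is an isometric image of the density matrix $\rho_k$, it is positive with unit trace, so each $T_k$ is a density matrix. The limit operator $T_\infty$ has kernel $c_\rho(\n) \boldsymbol{G}_{F(\n)}(\nu,\nu') = c_\rho(\n)\, \phi_\infty(\nu) \phi_\infty(\nu')$ with $\phi_\infty$ a unit vector in $L^2(\mathbb{R}, d\lambda)$; by continuity and positivity of the kernel $\rho(\nu,\nu')$, the matrix $c_\rho(\n) \in \mathbb{M}_n(\mathbb{C})$ is positive with unit trace, so $T_\infty$ is also a density matrix.

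For the pointwise convergence, Theorem~\ref{thm:1} together with the fact that $\lambda_\rho$ assigns zero mass to $\partial\sigma(\mathcal{N})$ (Assumption~\ref{ass:Post}.3) place $\n$ in the interior of $\sigma(\mathcal{N})$ $\mathbb{P}_\rho$-almost surely, so for large $k$ we have $\partial_\nu l_k(\hat{\mathcal{N}}_k)=0$ and Taylor's theorem gives
\begin{equation*}
k\bigl[l_k(\hat{\mathcal{N}}_k + \nu/\sqrt{k}) - l_k(\hat{\mathcal{N}}_k)\bigr] = \tfrac{\nu^2}{2}\, l_k''(\xi_{k,\nu}),
\end{equation*}
with $\xi_{k,\nu}$ between $\hat{\mathcal{N}}_k$ and $\hat{\mathcal{N}}_k + \nu/\sqrt{k}$. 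The uniform law of large numbers applied to $\partial_\nu^2 l$, combined with the identity $\mathbb{E}_\nu[\partial_\nu^2 l(\nu|\xi)] = -F(\nu)$ and continuity of $F$, yields $l_k''(\xi_{k,\nu}) \to -F(\n)$ uniformly for $\nu$ in any bounded set. For the denominator, let $\tilde Z_k := e^{-k l_k(\hat{\mathcal{N}}_k)} \int e^{k l_k(\eta)} \tr_{\mathbb{C}^n}(\rho(\eta,\eta))\, d\Theta(\eta)$; the change of variable $\eta = \hat{\mathcal{N}}_k + \tau/\sqrt{k}$ writes this as $\sqrt{k}\tilde Z_k = \int p_k(\tau) g_k(\tau)\, d\lambda(\tau)$ with $p_k(\tau) := e^{k[l_k(\hat{\mathcal{N}}_k+\tau/\sqrt{k}) - l_k(\hat{\mathcal{N}}_k)]}$ and $g_k(\tau) := h(\hat{\mathcal{N}}_k+\tau/\sqrt{k})\tr_{\mathbb{C}^n}(\rho(\hat{\mathcal{N}}_k+\tau/\sqrt{k},\hat{\mathcal{N}}_k+\tau/\sqrt{k}))$. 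Assumption~\eqref{assumptionp} gives $\int p_k \to Z_\infty := \int e^{-F(\n)x^2/2}\, d\lambda(x)$, and Scheff\'e's lemma (pointwise convergence of non-negative $p_k$ plus $L^1$-norm convergence of their integrals) yields $L^1$-convergence of $p_k/\int p_k$. Since $g_k$ is uniformly bounded on $\sigma(\mathcal{N})$ and converges locally uniformly to $g_\infty := h(\n)\tr_{\mathbb{C}^n}(\rho(\n,\n))$, splitting the integral on $\{|\tau| \leq R\}$ and its complement yields $\sqrt{k}\tilde Z_k \to g_\infty Z_\infty$. Assembling these limits gives $T_k(\nu,\nu') \to T_\infty(\nu,\nu')$ locally uniformly.

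To upgrade this to trace norm, observe that for any $\phi,\psi \in L^2(\mathbb{R},\mathbb{C}^n,d\lambda)$ of compact support, local uniform kernel convergence and dominated convergence give $\langle\phi, T_k\psi\rangle \to \langle\phi, T_\infty \psi\rangle$; since such $\phi,\psi$ are dense and $\|T_k\|_{\mathrm{op}} \leq \|T_k\|_1 = 1$, a standard $3\varepsilon$ argument extends this to $T_k \to T_\infty$ in the weak operator topology. Because $T_k, T_\infty \geq 0$ with $\tr(T_k) = \tr(T_\infty) = 1$, Gr\"umm's theorem on convergence of positive trace-class operators (see e.g.\ B.~Simon, \emph{Trace Ideals and Their Applications}, Theorem~2.20) then upgrades weak-operator convergence to trace-norm convergence, $\mathbb{P}_\rho$-almost surely, which is the claim. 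The main obstacle is precisely this final upgrade: locally uniform kernel convergence alone does not imply trace-norm convergence, and one must exploit positivity together with conservation of trace. A secondary subtlety is verifying the dominance required to apply the uniform law of large numbers to $\partial_\nu^2 l$, which is implicit rather than explicit in Assumption~\ref{ass:CLT}.
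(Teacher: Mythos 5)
Your proposal is correct and follows essentially the same route as the paper's proof: Taylor expansion of $l_k$ about $\hat{\mathcal{N}}_k$ (using $\partial_\nu l_k(\hat{\mathcal{N}}_k)=0$), the uniform law of large numbers for $\partial_\nu^2 l$, assumption \eqref{assumptionp} to control the normalizing integral, and a Gr\"umm-type theorem from Simon's \emph{Trace Ideals} to upgrade to trace-norm convergence, followed by the de Finetti decomposition \eqref{eq:5} to pass from $\mu_\upsilon^{\otimes\mathbb{N}}$- to $\mathbb{P}_\rho$-almost sure convergence. The only cosmetic differences are that you phrase the rescaling as a unitary conjugation and invoke the weak-convergence-plus-positivity version of Gr\"umm's theorem with Scheff\'e's lemma, whereas the paper proves Hilbert--Schmidt (hence strong) convergence of the unnormalized kernels and uses the strong-convergence version (Theorem 2.19) together with an $L^p$ lemma from Kallenberg playing the role of Scheff\'e.
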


\begin{proof}

Our proof is divided into two steps. \\
{\bf Step 1:} Let $\upsilon $  belong to the interior of  $ \sigma(\mathcal{N})$. It follows that ${\boldsymbol \rho}(\nu, \nu')$ is continuous at the point $(\upsilon, \upsilon)$, viewed as a function in $\mathbb{R}^2$.  We prove that, almost surely with respect to $ \mu_{\upsilon }^{\otimes \mathbb{N}} $,
\begin{align} \label{Estass} 
\lim_{k \to \infty }\frac{1}{\sqrt{k}}  
%\e^{-\frac{1}{2} \sqrt{k} \partial_{\nu}l_k(\nk) \nu }
    {\boldsymbol \rho}_k 
  \Big ( \nk + \frac{\nu}{\sqrt{k}} , \nk + \frac{\nu'}{\sqrt{k}} \Big )
%  \e^{-\frac{1}{2} \sqrt{k} \partial_{\nu}l_k(\nk) \nu' } 
   =  c_{\rho}(\upsilon) \boldsymbol{G}_{F(\upsilon)}(\nu, \nu')  
\end{align} 
in the standard topology of trace class operators on the Hilbert space $L^2(\mathbb{R},\, \mathbb{C}^n,  \d \lambda)$. 
Convergence with respect to the trace norm is, in general, not obvious because there are not many explicit formulas to compute this norm (unlike what happens with the Hilbert-Schmidt norm).  To prove \eqref{Estass} we will use Theorem 2.19 in \cite{Simon_Trace} that asserts that a sequence $ (\tau_k)_{k \in \mathbb{N}} $ converges to $\tau$ with respect to the trace-norm if $\tau_k$ and $\tau_k^*$ converge strongly to $\tau$ and $\tau^*$ (respectively) and the trace-norm of $\tau_k$ converges to the trace-norm of $  \tau $.  Notice that \eqref{Tr} alone does not help because even if $\tau_k$ and $\tau$ are positive, for every $k$, $\tau_k - \tau$ is not necessarily positive and, therefore, formula \eqref{Tr} cannot be used to estimate its trace-norm.

In the remaining of this proof we will use a couple of times the following result of measure theory that is a direct consequence of Theorem 1.21 and Lemma 1.32 in \cite{Kallenberg}: Suppose that $f_k$ are positive functions on $L^p$, $  1 \leq p < \infty $, that converge point-wise (a.e.) to a positive function $f \in L^p $. Moreover,  suppose that $(\eta_k)_{k \in \mathbb{N}}$ is a sequence of bounded measurable functions that is uniformly bounded (i.e., $ \sup_{x,k} |\eta_k(x)| < \infty  $ ) and converges point-wise to a function $ \eta  $ (a.e.). Assume  in addition that  $\int f_k^p \to \int f^p$. Then  
\begin{equation}
\label{obs}
 \lim_{k \to \infty }f_k \eta_k = f \eta, \hspace{1cm} \text{with respect to the $L^p$-norm}. 
\end{equation} 
We use this observation together with Eq.~(\ref{Tr}) to prove convergence with respect to the trace norm. 
  
Set 
\begin{align*} \tau_k  :=  & \frac{1}{\sqrt{k}}   {\boldsymbol \rho}_k 
  \Big ( \nk + \frac{\nu}{\sqrt{k}} , \nk + \frac{\nu'}{\sqrt{k}} \Big ). 
\end{align*} 
    We define 
$$C_k :=   \Big (  \int e^{ k l_k(\nu)}     \tr(\rho(\nu,\nu))  h(\nu) \d \lambda(\nu) \Big ) \e^{-k l_k(\nk)}
  \sqrt{k}   $$ and 
\begin{equation} \label{cor0}
\widetilde{\tau}_k : = C_k \tau_k.
\end{equation}
 Now, we will compute the limit, when $k$ tends to infinity, of $C_k$. Since the trace of $ \tau_k$ equals 1, we have that
 \begin{equation} \label{cor1}
 \lim_{k \to \infty}C_k = \lim_{k \to \infty} \tr( \widetilde{\tau}_k ),
\end{equation}     
whenever one of these limits exists.  Eq. \eqref{eq:8prima} implies that  
\begin{equation}\label{Estas1}
\widetilde{\tau}_k(\nu,\,\nu' )  =    \e^{-k l_k(\nk)}  \e^{\frac{1}{2} k l_k(\nk + \frac{\nu}{\sqrt{k}})} {\boldsymbol \rho}(\nk + \frac{\nu}{\sqrt{k}}, \nk + \frac{\nu'}{\sqrt{k}})     \e^{\frac{1}{2} k l_k(\nk + \frac{\nu'}{\sqrt{k}})}. 
\end{equation} 
 We define the random function    
$$
X_k(\nu ) := \e^{-\frac{1}{2}k l_k(\nk)}  \e^{\frac{1}{2} k l_k(\nk + \frac{\nu}{\sqrt{k}})}.   
$$

By  Taylor's formula -- note that $\partial_{\nu} l_k(\nk) = 0 $, because $\nk$  is where the maximum occurs -- we have that 
$$
l_k(\nk  + \frac{\nu}{\sqrt{k}}) - l_k(\nk)     = \frac{\nu^2}{2k} \partial_\nu^2 l_k(\nu_1)
$$
for some $\nu_1 \in (\nk,\,\nk + \frac{\nu}{\sqrt{k}})$. Hence the random function $X_k$ has the form
$$
X_k(\nu ) = e^{\frac{\nu^2}{4} \partial_\nu^2 l_k(\nu_1)}.
$$
By the uniform law of large numbers (cf. the proof of Theorem~\ref{T:CLT})  $-\partial_\nu^2 l_k(\nu)$ converges uniformly to $  \mathbb{E}_{\upsilon} \big (-\partial_\nu^2 l_k(\nu) \big )  $, which is strictly positive at $\nu = \upsilon$ by Item 3. in Assumption \ref{ass:CLT}.
We define 
$$
\Gamma(\nu) := e^{- \frac{\nu^2}{4} F(\upsilon)}.
$$
 Using Lemma \ref{Lprinc} and the uniform law of large numbers  we have that (almost surely with respect to $\mu_\upsilon^{\otimes \mathbb{N}} $)
%$$
%X_k(\nu ) \quad \to \quad e^{- \frac{\nu^2}{4} F(\upsilon)} e^{i\frac{\nu^2}{2} \mathbb{E}_{\upsilon}[\partial_\nu^2 \theta(\upsilon | \xi))]}  = :  \Theta(\nu, \upsilon) \equiv  \Theta(\nu) , \mbox{ as } k \to \infty,
%$$
% and, therefore (using Lemma \ref{Lprinc} and the uniform law of large numbers again),
\begin{align} \label{cor3}
 X_k(\nu) \to   \Gamma(\nu),   \hspace{.3cm}    {\boldsymbol \rho}(\nk + \frac{\nu}{\sqrt{k}}, \nk + \frac{\nu}{\sqrt{k}})  \to   {\boldsymbol \rho}(\upsilon, \upsilon ),  \hspace{.3cm}
\widetilde{\tau}_k(\nu, \nu')  \to    \Gamma(\nu) {\boldsymbol \rho}(\upsilon, \upsilon)
\Gamma(\nu'),
\end{align}
 almost surely with respect to $\mu_\upsilon^{\otimes \mathbb{N}} $.
% , and uniformly with respect to $\nu, \nu'$.  
 Next we use \eqref{obs} with $f_k(\nu) = X_k(\nu)^2 $, $ f = 
 \Gamma(\nu)^2   $, $\eta_k(\nu) =  \tr_{\mathbb{C}^n} {\boldsymbol \rho}(\nk + \frac{\nu}{\sqrt{k}}, \nk + \frac{\nu}{\sqrt{k}})    $ and $ \eta(\nu)= \tr_{\mathbb{C}^n} {\boldsymbol \rho}(\upsilon, \upsilon ) $ (Eq.~\eqref{cor3} together with Assumption \eqref{assumptionp} are the requirements for \eqref{obs}). We obtain (see Eq. \eqref{Tr}): 
 \begin{align} \label{core}
\lim_{k \to \infty } \tr (\widetilde{  \tau}_k ) = &     \lim_{k \to \infty } \int X_k(\nu)^2  \tr_{\mathbb{C}^n} {\boldsymbol \rho}(\nk + \frac{\nu}{\sqrt{k}}, \nk + \frac{\nu}{\sqrt{k}})  d  \lambda(\nu)   \\ \notag  = &  \int \Gamma(\nu)^2 \tr_{\mathbb{C}^n} {\boldsymbol \rho}(\upsilon, \upsilon ) d  \lambda(\nu) = \tr (\Gamma(\nu)  {\boldsymbol \rho}(\upsilon, \upsilon ) \Gamma(\nu')  ). 
 \end{align}
   Then Eq. \eqref{cor1} imply that 
 (almost surely with respect to $\mu_\upsilon^{\otimes \mathbb{N}} $)
  \begin{equation} \label{cor4}
 \lim_{k \to \infty}C_k    =  \tr(\Gamma(\nu) {\boldsymbol \rho} (\upsilon, \upsilon)
\Gamma(\nu')  )= \int  \Gamma(\nu) \tr_{\mathbb{C}^n} {\boldsymbol \rho}(\upsilon, \upsilon)
\Gamma(\nu)  d \lambda(\nu) . 
\end{equation}
Next we will prove that 
\begin{align}\label{cor5}
s-\lim_{k \to \infty}\widetilde{\tau}_k  = \Gamma(\nu){\boldsymbol \rho} (\upsilon, \upsilon)
 \Gamma(\nu')  
\end{align}
(here $s-\lim$ represents the strong limit), which together with \eqref{cor4} and Theorem 2.19 in \cite{Simon_Trace} implies that the limit in \eqref{cor5} holds true also with respect to the trace norm.  We will actually prove a stronger result, namely that the limit in \eqref{cor5} is valid with respect to the Hilbert-Schmidt
norm. We recall that for a Hilbert-Schmidt operator $O \equiv O(\nu, \nu')$ acting on 
$ L^2(\mathbb{R}, \mathbb{C}^n, d\lambda) $, its  Hilbert-Schmidt norm is given by
\begin{align}
\| O  \|_{{\rm HS}}^2 = \int  \| O(\nu, \nu') \|_{{\rm HS}(\mathbb{C}^n)}^2 d   \lambda(\nu) d  \lambda(\nu') ,
\end{align}
 where $\| \cdot \|_{{\rm HS}(\mathcal{C}^n)}^2 $ is the Hilbert-Schmidt norm for operators in $\mathbb{C}^n$.  
 We use again \eqref{obs}, but now we take $p = 2$.  Moreover, we set $f_k(\nu, \nu') = X_k(\nu) X_k(\nu') $, $ f(\nu, \nu' ) = 
   \Gamma(\nu)  \Gamma(\nu')  $, $\eta_k(\nu, \nu') = {\boldsymbol \rho}(\nk + \frac{\nu}{\sqrt{k}}, \nk + \frac{\nu'}{\sqrt{k}})$ and $ \eta(\nu, \nu')= {\boldsymbol \rho}(\upsilon, \upsilon ) $ (Eq.~\eqref{cor3} together with Assumption \eqref{assumptionp} are the requirements for \eqref{obs}). We obtain: 
\begin{align}\label{cor6}
\lim_{k \to \infty}  \| \widetilde{\tau_k} - &  \Gamma(\nu) {\boldsymbol \rho}(\upsilon, \upsilon) 
\Gamma(\nu')    \|^2_{\rm Hs} \\ \notag & = \lim_{k \to \infty} \int \Big \|    \widetilde{\tau_k}(\nu, \nu') -   \Gamma(\nu) {\boldsymbol \rho}(\upsilon, \upsilon)
  \Gamma(\nu')   \Big \|^2_{\rm Hs (\mathbb{C}^n)} d  \lambda(\nu)  d    \lambda(\nu')  =0.
\end{align}
Eqs. \eqref{core} and \eqref{cor6} then lead to
\begin{equation}\label{cor7}
 \lim_{k \to \infty}\widetilde{\tau}_k(\nu, \nu')  = \Gamma(\nu)
{\boldsymbol \rho}(\upsilon, \upsilon)
\Gamma(\nu'),  
\end{equation}
with respect to  the trace norm (see   Theorem 2.19 in \cite{Simon_Trace} and \eqref{cor4}, \eqref{cor6}). Then, \eqref{cor0}, \eqref{cor4} and \eqref{cor7} imply that 
\begin{align}
 \lim_{k \to \infty}{\tau}_k(\nu,\nu')  = \frac{1}{  \tr \Big (\Gamma(\nu){\boldsymbol \rho}(\upsilon, \upsilon)
\Gamma(\nu')  \Big )} \Gamma(\nu) {\boldsymbol \rho}(\upsilon, \upsilon)
\Gamma(\nu'), 
\end{align} 
which directly implies Eq. \eqref{Estass}. 

{\bf Step 2: } We prove Eq. \eqref{Estaprima}. Set $\mathcal{C}$ be the set of points $ \underline{\xi} $ such that \eqref{Estaprima} holds. As we have argued above (see the proof of Proposition \ref{thm:1a}), the set $\mathcal{C}$ is measurable. Since $\n = \upsilon$, almost surely with respect to $ \mu_{\upsilon }^{\otimes \mathbb{N}} $, (see Lemma \ref{Lprinc}), Step 1 implies that $  \mu_{\upsilon }^{\otimes \mathbb{N}}(\mathcal{C}) = 1 $, for every $\upsilon$ in the interior of $ \sigma(\mathcal{N}) $, that by assumption has $\lambda_\rho$ measure $1$. Finally, Eq \eqref{eq:5} implies the desired result, $\mathbb P_\rho (  \mathcal{C}) =1$. 
\end{proof}

We are ready to prove Theorem~\ref{T:CLT2} 

\begin{proof}[Proof of Theorem~\ref{T:CLT2}] 
We only prove that for every $\upsilon$ in the interior of 
$  \sigma(\mathcal{N}) $ (we abbreviate $F \equiv F(\upsilon)$)
\begin{align} \label{Estaup}
\lim_{k \to \infty } \Big \|     \frac{1}{\sqrt{k}} { \rho}_k 
  ( \hat{\mathcal{N}}_k + \frac{\nu}{\sqrt{k}} , \hat{\mathcal{N}}_k + \frac{\nu'}{\sqrt{k}})   -   \frac{ c_{\rho}( \upsilon ) }{h(\upsilon)}  \boldsymbol{G}_{F}(\nu, \nu')   \Big \|_{ \mathcal{B}^{(k)}_1
 } = 0 ,
\end{align}
almost surely with respect to $\mu_{\upsilon}^{\otimes \mathbb{N}}$. The rest of the
proof follows as in the proof of Step 2 in Proposition \ref{P:CLT2}.  

For every trace class  operator $\tau \equiv \tau(\nu, \nu')$ acting on $ L^2( \mathbb{R}, \mathbb{C}^n, h d\lambda )  $, we set 
$$ \tau^{(k)} (\nu, \nu') = 
\tau( \nk + \nu / \sqrt{k},   \nk + \nu' / \sqrt{k} ), $$
 acting on  $ L^2( \mathbb{R}, \mathbb{C}^n, h( \nk + \nu /\sqrt{k} ) d \lambda )  $. Next we set 
$$  \boldsymbol{ \tau}^{(k)}(\nu, \nu' )  =
 \sqrt{h^{(k)}(\nu)} \tau^{(k)}(\nu, \nu') \sqrt{ h^{(k)}(\nu')} ,  $$ as an operator  in $L^2(\mathbb{R}, \mathbb{C}^n,d\lambda )$.
As we argued in Eq.~(\ref{isoiota}), $\boldsymbol{\tau}^{(k)} $ and $\tau^{(k)}$ have the same norm, in their respective spaces. In the proof of Proposition \ref{P:CLT2} we prove that 
\begin{multline}\label{gas}
\lim_{k  \to \infty }  \Big \| \frac{1}{\sqrt{k}}   \boldsymbol{ \rho }_k^{(k)}  -  c_{\rho}(\upsilon) \boldsymbol{ G}_{F}(\nu, \nu')   \Big  \|_{\mathcal{B}_1 (L^2(\mathbb{R}, \mathbb{C}^n, d \lambda ))} \\ = \lim_{k  \to \infty }    \Big \|      \frac{1}{\sqrt{k}}   \rho _k^{(k)}  - c_{\rho}(\upsilon)  G_{F}(\nu, \nu')   \Big  \|_{\mathcal{B}^{(k)}_1 } = 0,
\end{multline} 
 almost surely with respect to $\mu_{\upsilon}^{\otimes \mathbb{N}}$,  where $G_F(\nu, \nu') := \frac{\boldsymbol{G}_F(\nu, \nu')  }{\sqrt{h^{(k)}}(\nu)  \sqrt{h^{(k)}}(\nu')  }$.
Moreover,
\begin{align}\label{gas1}
 \lim_{k  \to \infty }  &\Big \|    \frac{ c_{\rho}(\upsilon )}{h(\upsilon)}  \boldsymbol{G}_F(\nu,\nu')  -  c_{\rho}(\upsilon)  G_F(\nu,\nu')   \Big  \|_{\mathcal{B}^{(k)}_1 }  \\ & =  \notag  \lim_{k  \to \infty }  \Big \|  \frac{ c_{\rho}(\upsilon)   \sqrt{h^{(k)}(\nu)}  \sqrt{h^{(k)}(\nu')}   }{h(\upsilon )}  \boldsymbol{G}_F(\nu,\nu')  - c_{\rho}(\upsilon) \boldsymbol{  G}_F(\nu,\nu')   \Big  \|_{_{\mathcal{B}_1 (L^2(\mathbb{R}, \mathbb{C}^n, d \lambda ))} } =0.
\end{align}  
The proof of the last statement, which is left to the reader, can be made either by a direct computation or by the same procedure that we applied repeatedly to prove a trace convergence of operators.
Finally \eqref{gas} and \eqref{gas1} imply Eq. \eqref{Estaup}.  
\end{proof}

\begin{remark}
\label{remark}
A) Assumptions~\ref{ass:LLN}, \ref{ass:CLT} require the validity of various conditions for all points in the spectrum $\sigma(\mathcal{N})$ of $\mathcal{N}$. If these conditions only hold true in some open interval $N \subset \sigma(\mathcal{N})$ then our conclusions hold when conditioned on $\n \in N$.

B) By a direct integral version of the spectral decomposition of $\mathcal{N}$, there exists a Hilbert space bundle $\mathcal{H}_\nu$ over a measure space $(\sigma(\mathcal{N}), \lambda)$ such that
\begin{equation}
\label{eq:DirectIntegral}
\mathcal{H} \simeq \int^\oplus_{\sigma(\mathcal{N})} \mathcal{H}_\nu 
\end{equation}
and, under this isometry
\begin{equation}
\label{eq:SpectralN}
\mathcal{N} \simeq \int^\oplus_{\sigma(\mathcal{N})} \nu
\end{equation}
where $\nu$ is an abbreviation of $\nu \times \id_\nu$ acting on $\mathcal{H}_{\nu}$.  A Hilbert space bundle is called trivial if all the spaces $\mathcal{H}_\nu$ are isomorphic to a fixed space $\mathcal{H}^{(0)}$, so that $\mathcal{H}$ is isomorphic to the space of square-integrable $\mathcal{H}^{(0)}$-valued functions on $\sigma(\mathcal{N})$, i.e., $\mathcal{H} \simeq L^2(\sigma(\mathcal{N}), \mathcal{H}^{(0)}, \lambda)$. Any Hilbert space bundle can be decomposed into a countable sum of trivial bundles. Theorem~\ref{T:CLT2} can then be applied separately within each trivial bundle.
\end{remark}

\bibliography{NonDemolition}

\begin{thebibliography}{10}

\bibitem{Kraus}
K.~Kraus.
\newblock {\em States, effects and operations}.
\newblock Springer, 1983.

\bibitem{Holevo}
A.S. Holevo.
\newblock {\em Statistical structure of quantum theory}.
\newblock Springer, 2001.

\bibitem{Maassen}
H.~Maassen and B.~K{\"u}mmerer.
\newblock Purification of quantum trajectories.
\newblock {\em Lecture Notes-Monograph Series}, 48:252--261, 2006.

\bibitem{Benoist2016}
T.~Benoist, V.~Jaksic, Y.~Pautrat, and C.-A. Pillet.
\newblock On entropy production of repeated quantum measurements i. general
  theory.
\newblock {\em arXiv preprint arXiv:1607.00162}, 2016.

\bibitem{BFPP17}
T.~Benoist, M.~Fraas, Y.~Pautrat, and C.~Pellegrini.
\newblock Invariant measure for quantum trajectories.
\newblock {\em arXiv preprint arXiv:1703.10773}, 2017.

\bibitem{BFFS}
M.~Ballesteros, M.~Fraas, J.~Fr{\"o}hlich, and B.~Schubnel.
\newblock Indirect acquisition of information in quantum mechanics.
\newblock {\em Journal of Statistical Physics}, pages 1--35, 2015.

\bibitem{guerlin}
C.~Guerlin, J.~Bernu, S.~Deleglise, C.~Sayrin, S.~Gleyzes, S.~Kuhr, M.~Brune,
  J.M. Raimond, and S.~Haroche.
\newblock Progressive field-state collapse and quantum non-demolition photon
  counting.
\newblock {\em Nature}, 448(7156):889--893, 2007.

\bibitem{BaBe}
M.~Bauer and D.~Bernard.
\newblock Convergence of repeated quantum nondemolition measurements and
  wave-function collapse.
\newblock {\em Phys. Rev. A}, 84(4):044103, 2011.

\bibitem{BaBeBe1}
M.~Bauer, D.~Bernard, and T.~Benoist.
\newblock Iterated stochastic measurements.
\newblock {\em J. Phys. A Math Th}, 45(49):494020, 2012.

\bibitem{BaBeBe2}
M.~Bauer, T.~Benoist, and D.~Bernard.
\newblock Repeated quantum non-demolition measurements: convergence and
  continuous time limit.
\newblock {\em Ann. H. Poincar{\'e}}, 14(4):639--679, 2013.

\bibitem{Finetti}
B.~De~Finetti.
\newblock La pr{\'e}vision: ses lois logiques, ses sources subjectives.
\newblock {\em Ann. Inst. Henri Poincar{\'e}}, 7(1):1--68, 1937.

\bibitem{AliprantisBorder}
Charalambos~D. Aliprantis and Kim~C. Border.
\newblock {\em Infinite dimensional analysis}.
\newblock Springer, Berlin, third edition, 2006.
\newblock A hitchhiker's guide.

\bibitem{Reed-Simon-1}
Michael Reed and Barry Simon.
\newblock {\em Methods of modern mathematical physics. {I}}.
\newblock Academic Press, Inc. [Harcourt Brace Jovanovich, Publishers], New
  York, second edition, 1980.
\newblock Functional analysis.

\bibitem{Simon_Trace}
B.~Simon.
\newblock {\em Trace ideals and their applications}, volume 120 of {\em
  Mathematical Surveys and Monographs}.
\newblock American Mathematical Society, Providence, RI, second edition, 2005.

\bibitem{Ferguson}
T.~S. Ferguson.
\newblock {\em A course in large sample theory}.
\newblock Chapman \& Hall London, 1996.

\bibitem{Wald}
A.~Wald.
\newblock Note on the consistency of the maximum likelihood estimate.
\newblock {\em The Annals of Mathematical Statistics}, 20(4):595--601, 1949.

\bibitem{Kallenberg}
Olav Kallenberg.
\newblock {\em Foundations of modern probability}.
\newblock Probability and its Applications (New York). Springer-Verlag, New
  York, 1997.

\end{thebibliography}
\bibliographystyle{unsrt}

\end{document}